\newtheorem{theorem}{Theorem}
\newtheorem{lemma}{Lemma}
\newtheorem{proposition}{Proposition} 
\newtheorem{definition}{Definition}
\newtheorem{corollary}{Corollary}
\newcommand{\RomanNumeralCaps}[1]
    {\MakeUppercase{\romannumeral #1}}
\newcommand\numeq[1]%
\newcommand{\argmin}{\operatornamewithlimits{argmin}}
\def\delequal{\mathrel{\ensurestackMath{\stackon[1pt]{=}{\scriptstyle\Delta}}}}
\DeclareMathOperator*{\argmax}{arg\,max}
\begin{document} 
\title{Analysis of QoS in Heterogeneous Networks with Clustered Deployment and Caching Aware Capacity Allocation}

\author{\IEEEauthorblockN{Takehiro Ohashi\IEEEauthorrefmark{1}}
\IEEEauthorblockA{Department of Mathematical and Computing Science,
Tokyo Institute of Technology\\
Email: \IEEEauthorrefmark{1}ohashi.t.ag@m.titech.ac.jp }}

\maketitle

\begin{abstract}
In cellular networks, the densification of connected devices and base stations engender the ever-growing traffic intensity, and caching popular contents with a smart management is a promising way to alleviate such consequences. Our research extends the previously proposed analysis of three-tier cache enabled Heterogeneous Networks (HetNets). The main contributions are threefold. We consider the more realistic assumption; that is, the distribution of small base stations is following Poisson-Poisson cluster processes, which reflects the real situations of geographic restriction, user dense areas, and coverage-holes. We propose the allocation of downlink data transmission capacity according to the cases of requested contents which are either cached or non-cached in nearby nodes and elucidate the traffic efficiency of the allocation under the effect of clustered deployment of small base stations. The throughput and delay of the allocation system are derived based on the approximated sojourn time of Discriminatory Processor Sharing (DPS) queue. We present the results of achievable efficiency and such a system's performance for a better caching solution to the challenges of future cellular networks.

\vspace{5mm}
Keywords: Heterogeneous cellular networks, Spatial stochastic models, Queuing Theory, Stochastic Geometry, Poisson-Poisson cluster processes, proactive caching, D2D transmission, Discriminatory Processor Sharing queue, Quality of Service (QoS), capacity allocation, service differentiation.
\end{abstract}

\section{Introduction}
The ever-growing communication traffic and capacity demand in cellular networks with the rise of complex networks composed of heterogeneity of cells necessitate smart traffic management. To expand the fifth-generation cellular systems (5G) network or even proceed beyond sixth-generation cellular systems (6G) in reality, the challenge is to support hundreds of gigabits of traffic from user's devices to the core network through the backhauls as well as the high throughput of user-to-node data transmission while satisfying the extreme requirements such as availability, latency, energy, and cost-efficiency \cite{Giordani2020, Jaber2016,Parvez2018}. To this end, caching popular contents in the networks was proposed and analysed extensively for it's potentials \cite{Giordani2020, Jaber2016,Parvez2018,Bastug2015,Bastug2016}. In this paper, we extend the previous analysis of the cache-enabled heterogeneous network conducted by \cite{Yang2016}. The three-tier network, consisting of base stations (BSs), relays, and device-to-device sharing links (D2Ds) was considered, and their result showed that the global throughput of the proposed caching system can be increased significantly. To carry out such a system, in reality, more analysis is needed because of the complexity of the spatial dynamics of cellular networks. We also tackle the challenge of allocating network resources for better efficiency under this complexity. 

The contributions of this work are three-fold. To reinforce the result of the previous analysis, instead of considering all three tiers following independent Poisson point process (PPP), we consider that the deployment of small cells (either relay or small base stations) is distributed as Thomas cluster process (TCP), a class of Poisson-Poisson cluster processes (PPCP), which exhibit attractive point patterns and reflect the real situations of small cell deployment influenced by geographic restrictions, user dense areas and coverage-holes. This assumption of the attractive pattern of small cell deployment is in-negligible when we analyze the performance of the caching system since the performance of downlink data transmission is largely influenced by whether the requested contents are cached nearby cells or not. Next, we propose a new system for more efficient caching solutions, that is, allocating the downlink data transmission capacity according to the different circumstances of contents requests from users. This capacity allocation system is modeled by DPS queue, a variant of $\textit{multi-class processor sharing queue}$, which assigns various weights on each service rate of different user classes and enables the service differentiation. The previous work in \cite{Yang2016} used egalitarian processor sharing (EPS), a simple queuing model that shares service rate equally. The property of flow-level performance of data transmission in a cellular network was well studied using PS queue as well as DPS queue \cite{yujin2007,Mohamed2013} but without taking an account of complex spatial stochastic dynamics of users and base stations. The study of cellular networks is very limited without considering spatial stochastic dynamics since the signal to interference noise ratio (SINR) perceived by a user depends on complicated interaction among the spatial distribution of base stations. Recently, successful analyses are using both stochastic geometry approach and queuing theory to overcome this complex spatial dynamics such as \cite{Yang2016,Blaszczyszyn2015, Blaszczyszyn2019}. Thus far, EPS queue and generalized processor sharing (GPS) queue were used while the DPS queue was not the case. Mathematically, DPS queue is complex and some tractability is lost, and many analyses are limited under special circumstances \cite{Fayolle1980,Avrachenkov2005, Altman2006}. Instead of using the exact analysis of the DPS queue, we use the best approximation derived by \cite{izagirre2017} to show the possible improvement of a cellular network traffic efficiency by the allocation system. To the best of our knowledge, this is the first application DPS queue while taking an account of the effect of spatial stochastic patterns of users and base stations to analyze the capacity allocation performance of HetNets. Lastly, we compare numerically the results obtained by \cite{Yang2016} with clustered small cells and our capacity allocated system. We show that the clustered deployment of small cells differs the traffic of entire networks from the uniformly distributed one and by prioritizing downlink data transmission of a particular class of users the throughput of entire networks can be improved further. 

\subsection{Paper Organization}
The rest of the paper is organized as follows. In section \RomanNumeralCaps{2}, we introduce the system model of our three-tier HetNets and protocols of caching as well as tier association probability among users in different circumstances. Note that the tier association probabilities derived in this section are needed to express the average downlink data transmission rate of a typical user as well as the probability of a user active in different circumstances. In section \RomanNumeralCaps{3}, we derive the average downlink data transmission rate called average ergodic rate. In section \RomanNumeralCaps{4}, we derive QoS metrics such as the mean number of requests, delay, and throughput under the caching aware capacity allocation system using approximated sojourn time of DPS queue. The numerical results and the performance of such a system are shown in section \RomanNumeralCaps{5}. Ultimately, we give our conclusions and discussion in section \RomanNumeralCaps{6}. 

\section{System Model}
We introduce our three-tier HetNets consisting of macro base stations (MBSs), small base stations (SBSs), and device-to-device sharing links (D2Ds). Note that our work follows the idea from \cite{Yang2016} and regards their result as a baseline; here we consider SBSs in our system model that follows TCP while their analysis considers relays that follow independent PPP. The tier association probabilities in this network are also derived to characterize the request of contents from users to nearby base stations (BSs) or D2Ds. The list of notations is provided in TABLE \RomanNumeralCaps{1} for the reader to review some of the notations used hereafter.

\begin{table}[ht]
\centering
\caption{LIST OF NOTATIONS.}
\begin{tabular}{| c | c |}
\hline
$\Phi_i$& The point process of the i-th BS/D2D tier  \\
\hline
$f_{cd_i}, \;\;\; \Bar{F}_{cd_i}$ & PDF and CCDF of contact distance of $\Phi_i$ \\
\hline
$f_d(s)$ & The displacement kernel defined $\exp(\frac{-\|s\|^2}{2\sigma^2})/2\pi\sigma^2$ \\
\hline
$\tau_i(r)$ & $\begin{cases}
      \bar{m} \int^{\infty}_0 2\pi \lambda_{p_2} \frac{z}{\sigma} q(\frac{z}{\sigma},\frac{r}{\sigma}) \times \\ \exp(-\bar{m}(1-Q_1(\frac{z}{\sigma},\frac{\frac{P_2}{P_i}^{\frac{1}{\beta}}r}{\sigma})))dz ,& \text{if } i = 2 \\
    2 \pi \lambda_i r                      ,& \text{if } i = 1, 3 
\end{cases}$ \\
\hline
$f_{\mathcal{S}_i}, f_{\Hat{\mathcal{S}}_i}, f_{\mathcal{S}_{1,j}}$ & \makecell{PDF of distance of a serving node given an event \\ $\mathcal{S}_i,\mathcal{S}_i,\mathcal{S}_{1,j}$} \\
\hline
$\Bar{F}_{\mathcal{S}_i}, \Bar{F}_{\Hat{\mathcal{S}}_i}, \Bar{F}_{\mathcal{S}_{1,j}}$ & \makecell{CCDF of distance of a serving node given an event \\ $\mathcal{S}_i,\mathcal{S}_i,\mathcal{S}_{1,j}$} \\
\hline
${}_{2}F_1[a,b;c;d]$ & Gauss hyper-geometric function \\
\hline
$\alpha$ & Ratio of cache-enabled users ($\alpha \in [0,1]$) \\
\hline
$\beta$ & Path loss exponent ($\beta=4$) \\
\hline

\end{tabular}
\end{table}%

\subsection{Network Architecture}
We consider three-tier HetNet consisting of MBSs, SBSs, D2Ds. All SBSs are connected to the nearest MBS through backhauls, and then MBSs are connected to a core network (see Fig. 1). We define that the nodes of users and MBSs follow independent homogenious PPPs in $\mathbb{R}^2$ denoted as $\Phi_i$ with intensity $\lambda_i$ for $i=0,3$. The nodes of SBSs follow PPCP denoted as $\Phi_2$ in $\mathbb{R}^2$ with conditional intensity function given the parental point process $\lambda_2(\Vec{x}) = \bar{m} \sum_{\Vec{z} \in \Phi_{p_2}} f_d(\Vec{x}|\Vec{z})$ where $\Phi_{p_2}$ is the parental point process following PPP with intensity $\lambda_{p_2}$ and $\bar{m}$ is the mean number of daughter points distributed according to some distribution of $f_d(\Vec{x}|\Vec{z})$ around the center of each parental point $\Vec{z}$. In this paper for the ease of our analysis, we assume that the daughter points are independently and normally distributed around a parental point $\Vec{z}$, i.e. $f_d(x)= \exp(-\|x\|^2/(2\sigma^2))/(2\pi \sigma^2)$ which is called the Thomas cluster point process (TCP). The formal definition of PPCP is

\begin{definition}
Stationary PPCP on $\mathbb{R}^2$
\begin{equation}
    \Phi_{2}= \cup_{X_i \in \Phi_{p_2}}(X_i + \Psi_i),
\end{equation}
where $\Phi_{p_2}=\{X_i\}_{i=1}^{\infty}$ is the parental point process, homogeneous PPP with intensity $\lambda_{p_2}$ and a mark $\Psi_i = \{Y_{i,j}\}_{j=1}^{N_i}$ of the point $X_i$ is in-homogeneous PPP with the intensity function $\lambda_d(\Vec{y})$ s.t. $ \int_{\mathbb{R}^2} \lambda_d(\Vec{y})d\Vec{y} = \int_{\mathbb{R}^2} \Bar{m}f_d(\Vec{y})d\Vec{y} = \Bar{m}$ is the mean number of daughter points. $f_d$ is the density function of each daughter point around its parent.
\end{definition}

When a PPCP is conditioned on the parental point process, it is an in-homogeneous PPP. The more extensive description of PPCP can be found in \cite{Miyoshi19,Saha2018}.
Let $\mathcal{B}(\mathbb{R}^2)$ be the Borel $\sigma$-algebra on $\mathbb{R}^2$ and $A \subset \mathcal{B}(\mathbb{R}^2)$. Conditioned on $\Phi_{p_2}$, the counting measure of $\Phi_2$ is random measure which counts the number of points of $\Phi_2$ falling in A and is given by ( see \cite{Miyoshi19,Saha2018})
\begin{equation}
    N_{\Phi_2}(A)|\Phi_{p_2} \thicksim Poisson(\Bar{m} \sum_{\Vec{x} \in \Phi_{p_2}} \int_A f_d(\Vec{y}-\Vec{x})d\Vec{y}).
\end{equation}
We consider \cite{Yang2016} as a baseline in this paper, and in order to compare the results we let the intensity of SBSs in baseline case (which is assumed to follow independent homogeneous PPP) be the same as our scenario ($\Bar{m} \lambda_{p_2} = \lambda_2$). \par
For the user tier, we have the proportion of cache-enabled users who can transmit some contents cached in their local storage to some users who are not cache-enabled. Those cache enabled users are distributed as a thinning of the homogeneous PPP of $\Phi_1$ with intensity $\lambda_1 = \alpha \lambda_0$, where $\alpha \in [0,1]$ and $\lambda_0$ is the intensity of the nodes of users. Considering the real circumstances, we set the intensity as $\lambda_0 \gg \lambda_2 > \lambda_3$. \par
There ia $N$ total number of contents, and all of them are stored in MBSs. We assume all the contents are the same size denoted as $S \text{ [bits]}$. All SBSs can cache $M_2$ number of contents and have a caching storage with the size $M_2 \times S \text{ [bits]}$. Similarly, the cache enabled users can cache $M_1$ number of contents and have a caching storage with the fixed size $M_1 \times S \text{ [bits]}$. It is a natural to consider $M_1 \ll M_2 \ll N$. See Fig. 1, the blue disks are represented as caching of $M_1$ number of contents which all the cache enabled users to have and contain the same copies of the contents. SBSs have $M_2$ number of contents which includes $M_1$ number of contents as well. MBSs have all of the contents stored in their storage. The i-th popular content requested by the typical user follows Zipf distribution, and for arbitrary content popularity ranks $a,b$ with $a<b$:
\begin{equation}
    f_i = \frac{1/i^\gamma}{\sum_{j=1}^N \frac{1}{j^\gamma}},\: \: \: \: \: \: \sum_{i=a}^{b} f_i \delequal F_{pop}(a,b),
\end{equation}
where $\gamma \geq 0$ is the parameter of skewness of the content popularity distribution. Then, we assume all the cache-enabled users have cached $1$st to $M_1$-th popular contents in their storage. This implies that the probability of a content requested by the typical user that is stored in cache-enabled user is $F_{pop}(1,M_1)$. Similarly, for all SBSs, $1$st to $M_2$-th popular contents are cached in the storage. Therefore, the probability of a content requested by the typical user that is stored in SBSs is $F_{pop}(1,M_2)$.

\begin{figure}[!htbp]
\includegraphics[scale=0.25]{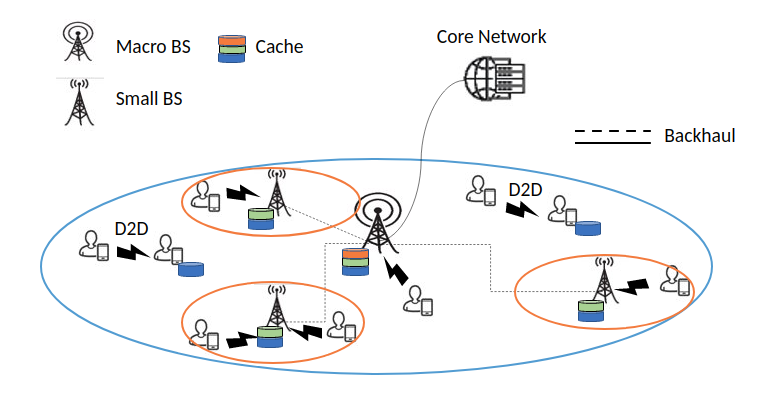}
\centering
\caption{Illustration of three tier cache enabled heterogeneous networks. The blue circle is the coverage area of MBS. The orange circle is the coverage area of SBS.}
\end{figure}

\subsection{Access and Cache Protocol}
For the completeness of this paper, we summarize all the assumption from \cite{Yang2016} and emphasis some of the important concepts here. We assume all users are following the max-power association rule. This means that all users request contents from a node which provides the highest power that is the closet node in a tier.  
For $i=1,2,3$, the maximum received power from the $i$-th tier is
\begin{equation}
    C_{i} = \nu B_{i} P_{i} \|\Vec{x}_{i}^*\|^{-\beta},
\end{equation}
where $\|\Vec{x}_{i}^*\|$ is the distance from the typical user at $(0,0)$ to its closest BS in $i$-th tier, $\Vec{x}_{i}^* \in \Phi_i$; that is $\Vec{x}^*_i = \argmax_{\Vec{x} \in \Phi_i} P_i \|\Vec{x}\|^{-\beta} = \argmin_{\Vec{x} \in \Phi_i}\|\Vec{x}\|$. $P_i$ is a transmit power of the node in $i$-th tier, and $\nu$, $B_{i}$, and $\beta$ are the propagation constant, bias, and path-loss exponent, respectively. For the ease of analysis, we let $\nu = 1$ and $B_{i} = 1$. In this paper we consider the path-loss exponent $\beta=4$.
Under this max-power association rule there are four different cases that the typical user connects to corresponding BSs/D2Ds.
\begin{itemize}
         \item Case 1: when the typical user is not cache-enabled, the user requests contents to a nearby node, either MBS, SBS, or cache-enabled user (D2D).
        \item Case 2: when the typical user is cache-enabled and the requested contents by the user are not cached in the storage, the only choice is to request the contents to either MBS or SBS.
        \item Case 3: when the typical user is not cache-enabled and a cache-enabled user (D2D) provides the highest power but the requested contents by the user were not cached in the storage of the cache-enabled user, the user requests the contents to either MBS or SBS.
        \item Case 4: when the typical user is cache-enabled and the requested contents by the user were cached in his/her local storage, the contents were retrieved and reused immediately.
\end{itemize}
 Fig. 2. is the illustration of all four cases. We assume that cache is placed according to proactive-caching: all contents are distributed to the storage of BSs/D2Ds during off-peak and ready to use. Additionally, when the typical user has requested contents to an SBS and the contents are not stored in the storage of the SBS, it requires wired backhauls to fetch requested contents from the storage of the nearest MBS. This is the case when the downlink data transmission is $\textit{backhaul-needed}$ (BH-needed), and we assume the typical user under BH-needed transmission experiences some delay.
 
\begin{figure}
    \centering
    \includegraphics[width=0.14\textwidth]{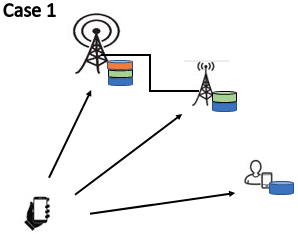}
    \includegraphics[width=0.14\textwidth]{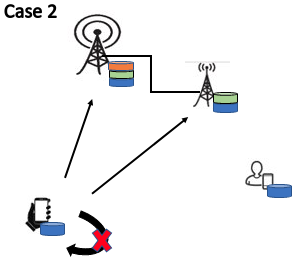}
    \includegraphics[width=0.14\textwidth]{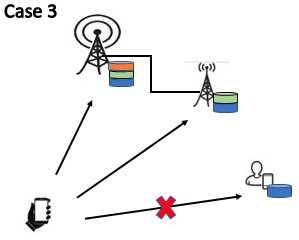}
    \includegraphics[width=0.14\textwidth]{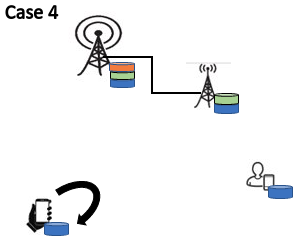}
    \caption{Illustration of contents requests in different cases.}
    \label{fig:foobar}
\end{figure}

\subsection{Tier Association Probability}

We derive the probability of ordering of maximum power received by the typical user from each tier $\Phi_i$ for $i=1,2,3$, where $\Phi_1$ and $\Phi_3$ follow independent PPP and $\Phi_2$ follows independent TCP. This probability is used to characterize the requests of contents from the typical user to BSs/D2Ds under the max-power association rule explained in the previous subsection. It is commonly called the tier association probability which is the probability that the typical user communicates to a tier. \par
The PDF of contact distance of $\Phi_i$ for $i \in \{1,3\}$ denoted as $f_{cd_i}$ and the complementary CDF denoted as $\Bar{F}_{cd_i}$ are
\begin{equation}
    f_{cd_i}(r_i) = 2\pi \lambda_i r_i e^{-\pi \lambda_i r_i^2}, \; \; \; \Bar{F}_{cd_i}(r_i) = e^{-\pi \lambda_i r_i^2}.
\end{equation}
 The conditional PDF of contact distance of $\Phi_2$ given the parental point process $\Phi_{p_2}$ is derived (see, Lemma 1 \cite{Miyoshi19,Saha2018}) and given by 
\begin{multline}
    f_{cd_2}(r_2|\Phi_{p_2}) = \Bar{m} \sum^{\infty}_{i=1} \frac{1}{\sigma} q(\frac{\|X_i\|}{\sigma},\frac{r_2}{\sigma}) \times  \\ \prod_{i=1}^\infty \exp(-\Bar{m}(1-Q_1(\frac{\|X_i\|}{\sigma},\frac{r_2}{\sigma}))),
\end{multline}
where $Q_1$ is the first-order Marcum Q-function  $Q_1(a,b) = \int^\infty_{b} z \exp(-\frac{z^2+a^2}{2})I_0(az)dz$ and $I_0(z)=\pi^{-1} \int^\pi_0 e^{z\cos\phi}d\phi$ is the modified Bessel function of the first kind with order zero. $q(a,b)$ is PDF of Rician distribution, and its cumulative distribution function is $1-Q_1(a,b)$. Note that the CDF of contact distance distribution of TCP was derived in \cite{Afshang}, and the complementary CDF is given by
\begin{multline}
    \Bar{F}_{cd_2}(r_2) = \exp(-\int^{\infty}_0 2\pi \lambda_{p_2} z (1-\exp(-\bar{m}\\(1-Q_1(\frac{z}{\sigma},\frac{r_2}{\sigma}))))dz).
\end{multline}
Then, by using (6) we can derive the PDF of contanct distance distribution of TCP in the following Lemma 1.
\begin{lemma}
The PDF of contact distance distribution of TCP is
\begin{multline}
    f_{cd_2}(r_2) =\mathbb{E}_{\Phi_{p_2}}[f_{cd_2}(r_2|\Phi_{p_2})]  \\ 
    = \mathbb{E}_{\Phi_{p_2}}[ \Bar{m} \sum^{\infty}_{i=1} \frac{1}{\sigma} q(\frac{\|X_i\|}{\sigma},\frac{r_2}{\sigma}) \prod_{i=1}^\infty \exp(-\Bar{m}(1-Q_1(\frac{\|X_i\|}{\sigma},\frac{r_2}{\sigma})))] \\
    = \int^{\infty}_0 \bar{m} \lambda_{p_2} 2\pi \frac{z}{\sigma} q(\frac{z}{\sigma},\frac{r_2}{\sigma}) \exp(-\bar{m}(1-Q_1(\frac{z}{\sigma},\frac{r_2}{\sigma})))dz \times  \\ 
    \exp(-\int^{\infty}_0 2\pi \lambda_{p_2} z (1-\exp(-\bar{m}(1-Q_1(\frac{z}{\sigma},\frac{r_2}{\sigma}))))dz).
\end{multline}
\begin{proof}
 We can directly apply the sum-product functional with respect to parental PP which follows PPP (see Lemma 6 \cite{Saha2018}).
\end{proof}
\end{lemma}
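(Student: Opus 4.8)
The plan is to recognize the conditional density in (6) as a \emph{sum-product functional} over the parental point process $\Phi_{p_2}$ and to evaluate its expectation using the Poisson structure of $\Phi_{p_2}$. Abbreviating, for a point $x \in \Phi_{p_2}$,
\[
g(x) = \exp\!\left(-\bar{m}\left(1-Q_1\!\left(\tfrac{\|x\|}{\sigma},\tfrac{r_2}{\sigma}\right)\right)\right), \qquad f(x) = \frac{\bar{m}}{\sigma}\, q\!\left(\tfrac{\|x\|}{\sigma},\tfrac{r_2}{\sigma}\right),
\]
equation (6) becomes $f_{cd_2}(r_2\mid\Phi_{p_2}) = \big(\sum_{x\in\Phi_{p_2}} f(x)\big)\big(\prod_{x\in\Phi_{p_2}} g(x)\big)$, so that $f_{cd_2}(r_2)=\mathbb{E}_{\Phi_{p_2}}\Big[\big(\sum_{x} f(x)\big)\big(\prod_{x} g(x)\big)\Big]$ is exactly the expectation handled by the sum-product functional of Lemma 6 in \cite{Saha2018}.

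First I would note that the point $x$ indexing the sum is itself one of the factors in the product, and pull it out: $\sum_{x} f(x) \prod_{y} g(y) = \sum_{x} f(x)\,g(x) \prod_{y\neq x} g(y)$. Applying the Campbell--Mecke theorem to the remaining sum converts the expectation into an integral against the intensity measure $\lambda_{p_2}\,dx$, weighted by the \emph{reduced} Palm expectation of $\prod_{y} g(y)$. Since $\Phi_{p_2}$ is a homogeneous PPP, Slivnyak's theorem replaces this reduced Palm expectation by the ordinary one, and the probability generating functional of the PPP gives $\mathbb{E}\big[\prod_{y} g(y)\big] = \exp\!\big(-\lambda_{p_2}\int_{\mathbb{R}^2}(1-g(y))\,dy\big)$. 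Together these produce
\[
f_{cd_2}(r_2) = \lambda_{p_2}\!\left(\int_{\mathbb{R}^2} f(x)g(x)\,dx\right)\exp\!\left(-\lambda_{p_2}\!\int_{\mathbb{R}^2}\big(1-g(y)\big)\,dy\right).
\]

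To finish I would use the radial symmetry of $f$ and $g$ (each depends on $x$ only through $\|x\|$) and pass to polar coordinates through $\int_{\mathbb{R}^2}(\cdot)\,dx = \int_0^\infty(\cdot)\,2\pi z\,dz$; inserting the explicit forms of $f$ and $g$ then reproduces the two integrals of (8) term by term. The manipulations after the functional is identified are routine, so I expect the only delicate point to be the first step: justifying the interchange of the infinite sum, the infinite product, and the expectation, and correctly accounting for the self-factor $g(x)$ when $x$ is extracted from the product (i.e.\ whether the product in (6) runs over all parental points or over $y\neq x$). I would discharge the convergence bookkeeping by dominated convergence, using $0 < g \le 1$ to bound the product and the finiteness of $\int f g\,dx$ to bound the sum; these are exactly the hypotheses packaged in Lemma 6 of \cite{Saha2018}, which I would invoke to make the interchange rigorous.
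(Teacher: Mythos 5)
Your proposal is correct and follows essentially the same route as the paper: the paper's proof is a one-line invocation of the sum-product functional for a Poisson parental process (Lemma 6 of the cited reference), which is exactly the identity you derive. The only difference is that you unpack what that lemma encapsulates---extracting the self-factor $g(x)$, applying Campbell--Mecke with Slivnyak's theorem, and evaluating the PGFL---correctly handling the subtlety that the product in (6) runs over all parental points including the summation index, so your write-up is a more self-contained version of the same argument.
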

Next, we derive the probability of ordering the maximum power received by the typical user from each tier.
\begin{proposition}
The ordered tier association probability of three tier $i,j,k \in \{1,2,3\}$, $i \neq j \neq k$ is
\begin{multline}
    \mathbb{P}(C_i > C_j > C_k) =  \\
    \int^{\infty}_{0} \int^{\infty}_{(\frac{P_{j}}{P_{i}})^{\frac{1}{\beta}}r_{i}} \int^{\infty}_{(\frac{P_{k}}{P_{j}})^{\frac{1}{\beta}}r_{j}}
    \int^{\infty}_0 \bar{m} \lambda_{p_2} 2\pi \frac{z}{\sigma} q(\frac{z}{\sigma},\frac{r_2}{\sigma}) \times \\ \exp(-\bar{m}(1-Q_1(\frac{z}{\sigma},\frac{r_2}{\sigma})))dz \times \\ \exp(-\int^{\infty}_0 2\pi \lambda_{p_2} z (1-\exp(-\bar{m}(1-Q_1(\frac{z}{\sigma},\frac{r_2}{\sigma}))))dz) \times \\
    \prod_{l \in \{1,3\}}2\pi \lambda_l r_l \exp(-\pi\lambda_{l} r_{l}^2) dr_{k} dr_{j} dr_{i}.
\end{multline}
\end{proposition}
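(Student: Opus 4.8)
The plan is to reduce the power-ordering event to a region in the joint space of contact distances and then exploit the mutual independence of the three point processes to factor the joint density over that region.

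First I would translate the event $\{C_i > C_j > C_k\}$ into constraints on the contact distances. Since $\nu = B_l = 1$ we have $C_l = P_l \|\vec{x}_l^*\|^{-\beta} = P_l r_l^{-\beta}$, where $r_l = \|\vec{x}_l^*\|$ is the contact distance of tier $l$; under the max-power association rule the serving node in tier $l$ is the nearest one, so $r_l$ is precisely the contact distance whose laws are recorded in (5) and Lemma 1. The inequality $C_i > C_j$ is equivalent to $P_i r_i^{-\beta} > P_j r_j^{-\beta}$, i.e. $r_j > (P_j/P_i)^{1/\beta} r_i$, and likewise $C_j > C_k$ gives $r_k > (P_k/P_j)^{1/\beta} r_j$. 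These are exactly the inner integration limits in the statement, with $r_i$ ranging freely over $(0,\infty)$.

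Second I would write the probability as the integral of the joint density of $(r_i, r_j, r_k)$ over this ordered region. Because $\Phi_1$, $\Phi_2$, $\Phi_3$ are mutually independent, the contact distances are independent random variables and the joint density factorizes as $f_{cd_i}(r_i)\, f_{cd_j}(r_j)\, f_{cd_k}(r_k)$. For the two tiers drawn from $\{1,3\}$ I substitute the PPP contact-distance density from (5), producing the factor $\prod_{l \in \{1,3\}} 2\pi \lambda_l r_l\, e^{-\pi \lambda_l r_l^2}$ that appears in the integrand.

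Third, and this is the step requiring the most care, I would supply the density for the TCP tier $\Phi_2$. Since $\Phi_2$ is independent of $\Phi_1$ and $\Phi_3$, the expectation over the parental process $\Phi_{p_2}$ commutes with the ordering constraints that couple $r_2$ to $r_1$ and $r_3$; hence I may first condition on $\Phi_{p_2}$, use the conditional contact-distance density (6), and then average over the parental PPP. That averaging is exactly the content of Lemma 1, yielding the product form for $f_{cd_2}(r_2)$ (the $z$-integral of the Rician term against the Marcum factor, multiplied by the exponential survival factor). Inserting this as the marginal of the TCP tier completes the integrand, and assembling the three marginals over the region fixed in the first step gives (9). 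The main obstacle is precisely this interchange: justifying that the TCP marginal from Lemma 1 may be used inside the factored joint density even though the event couples $r_2$ to the PPP tiers, which rests entirely on the independence of $\Phi_2$ from $\Phi_1$ and $\Phi_3$; once that is granted the remaining manipulations are routine bookkeeping.
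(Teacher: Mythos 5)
Your proposal is correct and follows essentially the same route as the paper: convert the power ordering $C_i > C_j > C_k$ into the distance constraints $r_j > (P_j/P_i)^{1/\beta} r_i$, $r_k > (P_k/P_j)^{1/\beta} r_j$, factor the joint density of the contact distances by independence of the three tiers, and substitute the PPP density (5) and the unconditional TCP density from Lemma 1 (8). Your extra remark justifying why the Lemma~1 marginal can be used inside the factored density (independence of $\Phi_2$ from $\Phi_1,\Phi_3$ lets the parental-process expectation stay within the $r_2$ factor) is a point the paper leaves implicit, but it is the same argument, not a different one.
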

\begin{proof}
Let $R_i, R_j$ and $R_k$ be the random variable of contact distance to tier $i,j$ and $k$, respectively. Then, we have the joint PDF of contact distance scaled by power from each tier as
\begin{multline}
    \mathbb{P}(C_i > C_j > C_k) = \mathbb{P}((\frac{P_i}{P_k})^{-\frac{1}{\beta}}R_k > (\frac{P_i}{P_j})^{-\frac{1}{\beta}}R_j > R_i) \\
    = \int^{\infty}_{0} \int^{\infty}_{(\frac{P_{j}}{P_{i}})^{\frac{1}{\beta}}r_{i}} \int^{\infty}_{(\frac{P_{k}}{P_{j}})^{\frac{1}{\beta}}r_{j}}
    \int^{\infty}_0 f_{cd_k}(r_k)f_{cd_j}(r_j) \\f_{cd_i}(r_i)dr_k dr_j dr_i.
\end{multline}
Since the PDF of contact distance distribution of each tier is given by (5) and (8), plugging in them into (10) gives the desired equation.
\end{proof}
The association probability conditional on parental PP is derived in lemma 3 \cite{Saha2018}. They assume an arbitrary many number of tiers in HetNets.  Similarly, We can derive the unconditional tier association probability by the following.
\begin{corollary}
The unconditional $i$-th tier association probability is
\begin{multline}
    \mathbb{P}(C_i > \max_{\forall n \neq i} C_n) = \int^{\infty}_0 \tau_i (r) \exp(-\int^{\infty}_{0} 2 \pi z \lambda_{p_2} \times \\
    (1-\exp(-\bar{m}(1-Q_1(\frac{z}{\sigma},\frac{(\frac{P_2}{P_i})^{\frac{1}{\beta}}r}{\sigma}))))dz) \times \\
    \exp(-\sum_{l \in \{1,3\}} \pi \lambda_l (\frac{P_l}{P_i})^{\frac{2}{\beta}} r^2)dr,
\end{multline}
where $\tau_i(r)$ is defined in TABLE \RomanNumeralCaps{1}, and $n \in \{1,2,3\}$.
\begin{proof}
In the case $i=2$,
\begin{multline}
    \mathbb{P}(C_i> \max_{\forall n \neq i} C_n) = \mathbb{P}(\bigcap_{j\in \{1,3\}}P_2 R_2^{-\beta} > P_j R_j^{-\beta}) \\
    = \mathbb{P}(\bigcap_{j\in \{1,3\}} R_j > (\frac{P_j}{P_2})^{\frac{1}{\beta}}R_2) \\
    = \int^{\infty}_0 \prod_{j\in \{1,3\}} \Bar{F}_{cd_j}((\frac{P_j}{P_2})^{\frac{1}{\beta}}r) f_{cd_2}(r)dr.
\end{multline}
In the case $i=1,3$, let $j=\{1,3\} \setminus \{i\}$, then we have
\begin{multline}
    \mathbb{P}(C_i> \max_{\forall n \neq i} C_n) = \mathbb{P}(R_2>(\frac{P_2}{P_i})^{\frac{1}{\beta}}R_i) \mathbb{P}(R_j>(\frac{P_j}{P_i})^{\frac{1}{\beta}}R_i)\\
    =\int^{\infty}_0 \prod_{j\in \{1,3\}} \Bar{F}_{cd_2}((\frac{P_2}{P_i})^{\frac{1}{\beta}}r) \Bar{F}_{cd_j}((\frac{P_j}{P_i})^{\frac{1}{\beta}}r) f_{cd_i}(r)dr.
\end{multline}
Plugging (5), (7) and (8) into (12) and (13); then, rewriting the plugged equation by $\tau_i(r)$ gives our result in (11).
\end{proof}
\end{corollary}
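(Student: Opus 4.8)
The plan is to reduce the statement to a one–dimensional integral over the contact distance of the associated tier, exploiting the mutual independence of the three point processes. First I would translate the max-power association rule into a statement about contact distances. Writing $R_n$ for the contact distance to $\Phi_n$, we have $C_n = P_n R_n^{-\beta}$ (since $\nu = B_n = 1$), and because $x \mapsto x^{-\beta}$ is strictly decreasing on $(0,\infty)$ for $\beta > 0$, the event $\{C_i > C_n\}$ is equivalent to $\{R_n > (P_n/P_i)^{1/\beta} R_i\}$. Hence the association event factorizes as a joint ordering condition,
\[
\{C_i > \max_{n \neq i} C_n\} = \bigcap_{n \neq i}\{R_n > (P_n/P_i)^{1/\beta} R_i\}.
\]

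Next I would use that $\Phi_1,\Phi_2,\Phi_3$ are independent (the two PPP tiers and the TCP tier being built on independent randomness), so $R_1,R_2,R_3$ are independent. Conditioning on the associated-tier distance $R_i = r$, the competing events $\{R_n > (P_n/P_i)^{1/\beta} r\}$ for $n \neq i$ are independent of one another and factor as a product of complementary CDFs; integrating against the density of $R_i$ then gives the master expression
\[
\mathbb{P}(C_i > \max_{n \neq i} C_n) = \int_0^\infty \Big(\prod_{n \neq i} \bar{F}_{cd_n}\big((P_n/P_i)^{1/\beta} r\big)\Big) f_{cd_i}(r)\, dr.
\]
This one formula covers both the case $i=2$ (two PPP competitors) and the case $i \in \{1,3\}$ (one PPP competitor together with the TCP tier $\Phi_2$).

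The remaining step is to substitute the explicit contact-distance laws and collect terms into the target form (11). For the PPP tiers $l \in \{1,3\}$ I use $\bar{F}_{cd_l}(s) = e^{-\pi\lambda_l s^2}$ and $f_{cd_l}(r) = 2\pi\lambda_l r\, e^{-\pi\lambda_l r^2}$ from (5); for the TCP tier I use the CCDF (7) and the PDF (8) from Lemma~1. The two PPP complementary CDFs merge into the single exponential $\exp(-\sum_{l \in \{1,3\}} \pi\lambda_l (P_l/P_i)^{2/\beta} r^2)$ appearing in (11). The tier-$2$ contribution is then organized so that the serving-tier density is isolated as $\tau_i(r)$: when $i \in \{1,3\}$ the residual factor $2\pi\lambda_i r$ is exactly $\tau_i(r)$ once the self-exponential $e^{-\pi\lambda_i r^2}$ of $f_{cd_i}$ has been absorbed into the PPP sum, and the TCP CCDF (7) supplies the first exponential in (11); when $i=2$ the PDF (8) splits into its daughter-density factor, which is precisely $\tau_2(r)$, times its CCDF-type exponential, which becomes the first exponential in (11) with $P_2/P_i = 1$.

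I expect the only genuine work to be this last bookkeeping step: matching the two-factor product structure of the TCP contact-distance PDF (8) against the template (11), and verifying that the piecewise definition of $\tau_i$ in TABLE~\RomanNumeralCaps{1} reproduces both cases simultaneously. There is no analytic obstacle beyond the independence argument and the careful regrouping of exponentials; no new estimates, limits, or interchange-of-integration arguments are required, since (8) already incorporates the expectation over the parental process $\Phi_{p_2}$.
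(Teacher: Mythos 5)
Your proposal is correct and follows essentially the same route as the paper's own proof: translate the max-power event into the ordering of contact distances, use independence of the three tiers to factor the probability as $\int_0^\infty \prod_{n\neq i}\bar{F}_{cd_n}((P_n/P_i)^{1/\beta}r)\,f_{cd_i}(r)\,dr$, then substitute (5), (7), (8) and regroup into $\tau_i(r)$ and the two exponentials of (11). The only cosmetic difference is that you write one unified master integral while the paper splits the cases $i=2$ and $i\in\{1,3\}$ into separate displays; the bookkeeping you describe for absorbing the self-exponential of $f_{cd_i}$ and splitting the TCP density (8) into $\tau_2(r)$ times its CCDF factor is exactly what the paper's final sentence performs.
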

We define the notations and use them for the convenience hereafter. $ \mathbb{P}(C_i> \max_{\forall n \neq i} C_n) \overset{\Delta}{=}  \mathcal{G}_{3,i}$, and also, we denote $\mathbb{P}(C_i>C_j>C_k)$ and $\mathbb{P}(C_i>C_j)$ as $\mathbb{P}_{i,j,k}$ and $\mathbb{P}_{i,j}$, respectively. Note that $\mathbb{P}(C_i>C_j)$ is two tier case of proposition 1, and the derivation is the same and omitted.

\section{Average Ergodic Rate}
In this section, the average ergodic rate of the downlink is analyzed under the three-tier HetNets described in the previous section. We assume that the communication links between BSs/D2Ds to users share the same frequency bandwidth, and this yields the interference among the links. Recall that there are four cases of the typical user's requests of contents, and the rate of average downlink data transmission differs in those different cases (see Fig. 2). 

\subsection{Active BSs and D2D Links}
We define all active nodes which yield the interference. We denote $\lambda'_i$ by the intensity of active nodes in tier $i$ for $i=1,2,3$. For the tier 1, not all cache-enabled users need to be active when the number of requests from non-cache-enabled users through the D2D transmission is less than the cache-enabled users. Thus, we define the intensity of active D2D nodes as $\lambda_1^{'}  \overset{\Delta}{=} \min \{\alpha \lambda_0, \lambda_0  \mathcal{G}_{3,1} (1-\alpha)  F_{pop}(1,M_1)\}$. For the other tiers, since all MBSs and SBSs are active over the time, we define $\lambda'_i \overset{\Delta}{=} \lambda_i$. We also define $\Phi_1'$ as the thinned PPP of active D2D links and $\Phi_2' \overset{\Delta}{=} \Phi_2$ and $\Phi_3' \overset{\Delta}{=} \Phi_3$, where active MBSs and SBSs are unchanged from original MBSs and SBSs.

\subsection{Signal to Interference Plus Noise Ratio}
The downlink power received from the serving BS/D2D at $\Vec{x}_i \in \Phi_i'$ is $P_i h_{\Vec{x}_i} \|\Vec{x}_i\|^{-\beta}$ for $i=1,2,3$, and the signal-to-interference-plus-noise ratio (SINR) of the typical user located at the origin connecting to that node is
\begin{equation}
    SINR_i(\|\Vec{x}_i\|) = \frac{P_i h_{\Vec{x}_i} \|\Vec{x}_i\|^{-\beta}}{\sum^{3}_{j=1} \sum_{\Vec{y} \in \Phi_j' \setminus \{\Vec{x}_i\}}P_j h_{\Vec{y}} \|\Vec{y}\|^{-\beta}+ N_0},
\end{equation}
where $h_{\Vec{x_i}}$ and $h_{\Vec{y}}$ follow Rayleigh fading distributed as exponential with unit mean. In the denominator, all 
 $\|\Vec{y}\|$ is the distance between reference user to its interfering active nodes in $j$-th tier. $N_0$ is thermal noise.
Let $x$ be the distance of a requesting user to the serving node. The average ergodic rate of the typical user when it communicates with the i-th tier is given by \cite{Yang2016}:
\begin{equation}
    \mathcal{U}_i \delequal \mathbb{E}_{x} [\mathbb{E}_{SINR_{i}}[\ln(1+SINR_{i}(x))|\|\Vec{x}_i\|=x]],
\end{equation}
where the expectation of $SINR_i$ is with respect to all $\Phi_j'$ for $j=1,2,3$ and the fading effect $h_{\Vec{x_i}}$ and $h_{\Vec{y}}$, and then the expectation of $\|\Vec{x}_i\|$ is over the distance between the typical user and its serving node. The equation above indicates the average ergodic rate, which is the mean rate of data transmitted, of a randomly chosen user associated with the $i$-th tier in a cell.

\subsection{The Average Ergodic Rate in Case 1}
Let us define the association event such that the typical user requests contents to the $i$-th tier as $\mathcal{S}_i = \{P_i \|\Vec{x}^*_i\|^{-\beta} > \max_{n \neq i} P_n \|\Vec{x}^*_{n}\|^{-\beta} : n \in \{1,2,3\} \}$. The PDF of the distance distribution of the serving node conditional on parental PP and $\mathcal{S}_i$ is given in Lemma 4, \cite{Saha2018}. Then, we can obtain the PDF of the distance between the typical user and the associated serving node unconditional on $\Phi_{p_2}$ and conditional on an event $\mathcal{S}_i$.
\begin{lemma}
The PDF of the distance from the typical user associated to a node in the $i$-th tier for $i=1,2,3$ is
\begin{multline}
    f_{\mathcal{S}_i}(x) = \frac{\tau_i(x)}{\mathcal{G}_{3,i}} \exp(-\int^{\infty}_0 2\pi \lambda_{p_2} z \times \\ (1- \exp(-\bar{m}(1-Q_1(\frac{z}{\sigma},\frac{(\frac{P_2}{P_i})^{\frac{1}{\beta}}x}{\sigma}))))dz) \times\\
    \exp(- \pi \sum_{j \in \{1,3\}} \lambda_{j} (\frac{P_j}{P_i})^{\frac{2}{\beta}} x^2),
\end{multline}
where $\tau_i(x)$ is given in the TABLE 1. 
\begin{proof}
The CCDF of $f_{\mathcal{S}_i}(x)$ can be expressed as
\begin{align}
    \Bar{F}_{\mathcal{S}_i}(r| \mathcal{S}_i) &= \mathbb{P}(\|\Vec{x}_{i}^*\|>r| \mathcal{S}_i)\\
    &= \frac{\mathbb{P}(R_i>r, \mathcal{S}_i)}{\mathbb{P}(\mathcal{S}_i)} \\
    &= \frac{1}{\mathcal{G}_{3,i}} \mathbb{P}(\bigcap_{j\in \{1,2,3\}\setminus \{i\}} P_i R_i^{-\beta}> P_j R_j^{-\beta}, R_i > r).
\end{align}
Then, the last equation is similar to what we derived in Corollary 1, and by taking the derivative with respect to $r$ we get the desired result.
\end{proof}
\end{lemma}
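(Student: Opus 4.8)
The plan is to obtain $f_{\mathcal{S}_i}$ by first computing the complementary CDF of the serving distance $R_i=\|\vec{x}_i^*\|$ conditioned on the association event $\mathcal{S}_i$, and then differentiating. First I would use the definition of conditional probability,
\[
\bar{F}_{\mathcal{S}_i}(r\mid\mathcal{S}_i)=\mathbb{P}(R_i>r\mid\mathcal{S}_i)=\frac{\mathbb{P}(R_i>r,\,\mathcal{S}_i)}{\mathbb{P}(\mathcal{S}_i)},
\]
and identify the denominator with the tier association probability of Corollary 1, i.e. $\mathbb{P}(\mathcal{S}_i)=\mathcal{G}_{3,i}$. The whole task then reduces to evaluating the numerator $\mathbb{P}(R_i>r,\mathcal{S}_i)$, which is nothing but the integral of Corollary 1 with the lower limit of the serving-distance integration raised from $0$ to $r$.

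For the numerator I would rewrite the association event through contact distances, $\mathcal{S}_i=\bigcap_{j\neq i}\{P_iR_i^{-\beta}>P_jR_j^{-\beta}\}=\bigcap_{j\neq i}\{R_j>(P_j/P_i)^{1/\beta}R_i\}$, and condition on $R_i=x$. Since $\Phi_1,\Phi_2,\Phi_3$ are mutually independent, the competitor constraints factorize given $R_i=x$, each contributing the CCDF of its contact distance at the scaled threshold, so that
\[
\mathbb{P}(R_i>r,\mathcal{S}_i)=\int_r^\infty\Big(\prod_{j\neq i}\bar{F}_{cd_j}\big((P_j/P_i)^{1/\beta}x\big)\Big)f_{cd_i}(x)\,dx.
\]
The parental-process averaging is handled exactly as in Lemma 1 and (7): whenever tier $2$ enters — as the serving tier when $i=2$, or as a competitor when $i\in\{1,3\}$ — the PPP factors do not depend on $\Phi_{p_2}$ and can be pulled out of $\mathbb{E}_{\Phi_{p_2}}[\cdot]$, so its factor is replaced by the unconditional quantities $f_{cd_2}$ of (8) and $\bar{F}_{cd_2}$ of (7). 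Substituting (5), (7) and (8) then makes the integrand coincide with that of Corollary 1, equation (11).

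Differentiating by the fundamental theorem of calculus gives
\[
f_{\mathcal{S}_i}(x)=-\frac{d}{dr}\bar{F}_{\mathcal{S}_i}(r\mid\mathcal{S}_i)\Big|_{r=x}=\frac{1}{\mathcal{G}_{3,i}}\Big(\prod_{j\neq i}\bar{F}_{cd_j}\big((P_j/P_i)^{1/\beta}x\big)\Big)f_{cd_i}(x),
\]
i.e. precisely the Corollary 1 integrand divided by $\mathcal{G}_{3,i}$, and the remaining work is to re-express this in the compact form (16) using the $\tau_i$ of TABLE 1. I expect this last bookkeeping to be the main obstacle, because the serving-tier contribution must be shown to split as $\tau_i(x)$ times a void-probability exponential in two different regimes. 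For $i\in\{1,3\}$ the serving density is $f_{cd_i}(x)=\tau_i(x)\,e^{-\pi\lambda_i x^2}$, whose exponential merges into the $\{1,3\}$-sum of (16) via the trivial scaling $(P_i/P_i)^{2/\beta}=1$, while the tier-$2$ competitor supplies the first exponential through $\bar{F}_{cd_2}((P_2/P_i)^{1/\beta}x)$. For $i=2$, instead, the serving density $f_{cd_2}(x)$ of Lemma 1 is itself the product of the $q$-integral — which is exactly $\tau_2(x)$ — and its own void-probability exponential, the latter being the $(P_2/P_2)^{1/\beta}=1$ instance of the first exponential in (16), with the two PPP competitors filling the $\{1,3\}$-sum. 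Verifying that both cases collapse to the single expression (16) with $\tau_i$ as defined in TABLE 1 is the crux; everything else is a direct application of independence and the fundamental theorem of calculus.
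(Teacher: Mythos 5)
Your proposal is correct and takes essentially the same route as the paper's proof: both write the conditional CCDF as $\mathbb{P}(R_i>r,\,\mathcal{S}_i)/\mathcal{G}_{3,i}$, recognize the numerator as the Corollary 1 integral with the lower limit of the serving-distance integration raised to $r$, and differentiate to obtain the density. The additional detail you supply --- the independence factorization into contact-distance CCDFs $\bar{F}_{cd_j}$ and the two-regime bookkeeping showing that both $i=2$ and $i\in\{1,3\}$ collapse to the single $\tau_i$ form of (16) --- is precisely what the paper compresses into the phrase ``similar to what we derived in Corollary 1.''
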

Since we know that the distribution of the distance of the typical user and the associated serving node, we are ready to obtain the average ergodic rate in case 1 as follows.
\begin{theorem}
The average ergodic rate of the typical user connecting to a node in the $i$-th tier in case 1 is
\begin{multline}
    \mathcal{U}_{1,i} = \int^{\infty}_0 \int^{\infty}_0 \frac{\mathcal{M}_y(s,x)}{1+s}ds \frac{\tau_i(x)}{\mathcal{G}_{3,i}} \times \\ 
   \exp(-\int^{\infty}_0 2\pi \lambda_{p_2} z (1-\exp(-\bar{m}(1-Q_1(\frac{z}{\sigma},\frac{(\frac{P_2}{P_i})^{\frac{1}{\beta}}x}{\sigma}))))dz) \times \\
    \exp(- \pi \sum_{j \in \{1,3\}} \lambda_{j} (\frac{P_j}{P_i})^{\frac{2}{\beta}} x^2) dx,
\end{multline}
where 
\begin{align*}
&\mathcal{M}_y(s,x) = \\
&\prod_{j\in \{1,3\}} \exp(-2\pi \lambda'_j x^2 \frac{s (\frac{P_j}{P_i})^{\frac{2}{\beta}}}{\beta(1-\frac{2}{\beta})} {}_{2}F_1[1,1-\frac{2}{\beta};2-\frac{2}{\beta};-s])\\
&\exp(-2 \pi \lambda_{p_2} \int_{0}^{\infty}  (1-\exp(-\Bar{m} \int_{(\frac{P_2}{P_i})^{\frac{1}{\beta}}x}^{\infty} \frac{1}{1+(s\frac{P_2}{P_i})^{-1}(\frac{y+z}{x})^{\beta}} \\ 
& \frac{1}{\sqrt{2\pi}\sigma^2} \exp(-\frac{y^2}{2\sigma^2}) dy))zdz).
\end{align*}
and ${}_{2}F_1[a,b;c;d]$ is Gauss hyper-geometric function.
\end{theorem}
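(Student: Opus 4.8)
The plan is to start from the ergodic--rate definition (15), specialized to case~1 where the serving--node distance obeys the density $f_{\mathcal{S}_i}$ of Lemma~2, and to strip off the logarithm with the layer--cake identity $\mathbb{E}[\ln(1+Y)]=\int_0^\infty (1+s)^{-1}\,\mathbb{P}(Y>s)\,ds$, which holds for any nonnegative $Y$ (write $\ln(1+Y)=\int_0^Y(1+t)^{-1}dt$ and exchange the order of integration, or substitute $s=e^t-1$ in the tail--integral form of the mean). Taking $Y=SINR_i(x)$ converts the inner expectation in (15) into $\int_0^\infty (1+s)^{-1}\,\mathbb{P}\big(SINR_i(x)>s\mid \|\vec{x}_i\|=x\big)\,ds$, so everything reduces to the conditional tail probability followed by an average over the distance $x$.

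I would then use Rayleigh fading. With $I=\sum_{j=1}^3\sum_{\vec{y}\in\Phi_j'\setminus\{\vec{x}_i\}}P_j h_{\vec{y}}\|\vec{y}\|^{-\beta}$ the aggregate interference and $s'=s\,x^\beta/P_i$, the $\mathrm{Exp}(1)$ tail gives
\[
\mathbb{P}(SINR_i(x)>s)=\mathbb{E}\!\left[e^{-s'(I+N_0)}\right]=e^{-s'N_0}\,\mathcal{L}_I(s'),
\]
i.e. the conditional tail equals the Laplace transform of the interference evaluated at $s'$. In the interference--limited regime the noise factor is dropped and $\mathbb{P}(SINR_i(x)>s)=\mathcal{L}_I(s')$, which I will identify with $\mathcal{M}_y(s,x)$; retaining noise would simply append a factor $e^{-sN_0 x^\beta/P_i}$.

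The core step is the evaluation of $\mathcal{L}_I(s')$, which factorizes across the independent tiers as $\prod_{j=1}^3\mathcal{L}_{I_j}(s')$. For the PPP tiers $j\in\{1,3\}$ I would invoke the probability generating functional of a PPP with the per--interferer fading average $\mathbb{E}_h[e^{-s'P_j h r^{-\beta}}]=(1+s'P_j r^{-\beta})^{-1}$; the association event $\mathcal{S}_i$ forces every tier--$j$ node to be weaker than the serving node, confining interferers to $r>(P_j/P_i)^{1/\beta}x$, and the radial integral then collapses to the Gauss hypergeometric factor in $\mathcal{M}_y$ (note that $s'P_j r^{-\beta}=s$ at the exclusion boundary, which is why the argument of ${}_2F_1$ is $-s$ and the prefactor carries $x^2(P_j/P_i)^{2/\beta}$). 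For the cluster tier $j=2$ I would first condition on the parent process $\Phi_{p_2}$: given the parents, $\Phi_2$ is an inhomogeneous PPP, so a single cluster with parent at distance $z$ contributes, after the fading average and the algebraic identity $1-(1+s'P_2 r^{-\beta})^{-1}=[\,1+(sP_2/P_i)^{-1}(r/x)^\beta\,]^{-1}$ with interferer distance $r=y+z$, an exponential of a Gaussian--displacement integral of the form $\bar{m}\int [\,1+(sP_2/P_i)^{-1}((y+z)/x)^\beta\,]^{-1}f_d(y)\,dy$. De--conditioning by the sum--product functional of the parent PPP, exactly as in Lemma~1 and Corollary~1, then produces the nested $\exp\!\big(-2\pi\lambda_{p_2}\int(1-\exp(-\bar{m}\int(\cdots)))z\,dz\big)$ structure, with the inner lower limit $(P_2/P_i)^{1/\beta}x$ again dictated by $\mathcal{S}_i$.

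Finally I would reassemble the pieces: substitute $\mathcal{M}_y(s,x)=\mathcal{L}_I(s')$ into the $s$--integral, multiply by the serving--distance density $f_{\mathcal{S}_i}(x)$ from Lemma~2, and integrate over $x$ to reach (20). I expect the tier--2 term to be the main obstacle. Unlike the PPP tiers, the Thomas cluster process carries intra--cluster correlations, so the association constraint and the interference average cannot be folded into a single PGFL but require the two--stage treatment (condition on parents, then apply the sum--product functional); moreover one must thread the exclusion region induced by $\mathcal{S}_i$ consistently through the displacement integral, which is the delicate bookkeeping in the cluster case. The PPP integrals and their reduction to ${}_2F_1$ are routine by comparison.
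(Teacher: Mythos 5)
Your proposal is correct and arrives at exactly the paper's formula, but it takes a genuinely different route through the first reduction step. The paper invokes Hamdi's MGF lemma to write $\mathbb{E}[\ln(1+SINR_i(x))]$ as $\int_0^\infty s^{-1}\bigl(\mathcal{M}_y(s,x)-\mathcal{M}_{xy}(s,x)\bigr)\,ds$, where $\mathcal{M}_{xy}$ additionally averages over the serving-link fading, and only afterwards uses $h_{\vec{x}_i^*}\sim \mathrm{Exp}(1)$ to collapse $\mathcal{M}_{xy}(s,x)=(1+s)^{-1}\mathcal{M}_y(s,x)$, which produces the $(1+s)^{-1}\mathcal{M}_y$ integrand. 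You instead use the layer-cake identity $\mathbb{E}[\ln(1+Y)]=\int_0^\infty (1+s)^{-1}\mathbb{P}(Y>s)\,ds$ together with the exponential tail of the Rayleigh-faded serving signal, so that the conditional tail $\mathbb{P}(SINR_i(x)>s)$ is itself the interference Laplace transform and coincides with $\mathcal{M}_y(s,x)$ (your $s'=s x^\beta/P_i$ rescaling checks out against the paper's definition). The two uses of the $\mathrm{Exp}(1)$ assumption are equivalent here; your route is more elementary and self-contained, whereas the paper's Hamdi-lemma route keeps the expression valid for an arbitrary serving-fading distribution until the final step, so it would extend more readily to, e.g., Nakagami serving links, where the tail-probability shortcut fails. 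From that point on your plan coincides with the paper's proof: factorize the interference Laplace transform over independent tiers, apply the PPP PGFL with exclusion radius $(P_j/P_i)^{1/\beta}x$ and reduce the radial integral to the ${}_2F_1$ term via the substitution $u=(y/x)^\beta$, and for the TCP tier condition on the parents, apply the per-cluster PGFL, then de-condition over the parent PPP to obtain the nested double-exponential, finally letting $N_0\to 0$. One small terminological slip: de-conditioning the interference product over parents requires only the PGFL of the parent PPP, not the sum-product functional — the latter is needed in Lemma 1, where a sum (density) term appears alongside the product; this does not affect the validity of your argument.
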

\begin{proof}
See Appendix A.
\end{proof}

\subsection{The Average Ergodic Rate in Case 2}
In case 2, the typical user requests contents to either MBSs or SBSs. Therefore, by following the similar derivation as Lemma 1, given an event $\Hat{\mathcal{S}}_i = \{ P_i \|\Vec{x}^*_i\|^{-\beta} > \max_{j \neq i} P_j \|\Vec{x}^*_{j}\|^{-\beta}: j \in \{2,3\}\}$ the unconditional PDF of the distance between the typical user and the serving node in the $i$-th is
\begin{multline}
    f_{\Hat{\mathcal{S}}_i}(x) = \frac{\tau_i(x)}{\mathbb{P}_{i,j}} \exp(-\int^{\infty}_0 2\pi \lambda_{p_2} z \times \\ (1-\exp(-\bar{m}(1-Q_1(\frac{z}{\sigma},\frac{(\frac{P_2}{P_i})^{\frac{1}{\beta}}x}{\sigma}))))dz) \times\\
    \exp(- \pi \lambda_{3} (\frac{P_3}{P_i})^{\frac{2}{\beta}} x^2),
\end{multline}
where $i,j \in \{2,3\}$ s.t. $i\neq j$ and $\tau_i(x)$ is defined in TABLE 1. The derivation is similar to Lemma 2 and therefore omitted. For the average ergodic rate of case 2, the interfering active D2D nodes can be closer than a serving node, which is either MBS or SBS. This is because the typical user is cache-enabled and can not request the contents from the nearest cache-enabled user through the D2D link (see Fig. 2). Therefore, the possible distance between a requested user to the closest cache-enabled user denoted as $a$ can be as close as 0; that is, the possible distance $a \rightarrow 0$. Then, by considering this we have the following theorem.
\begin{theorem}
The average ergodic rate of the typical user connecting to a node in the $i$-th tier in case 2 is
\begin{multline}
    \mathcal{U}_{2,i} = \int^{\infty}_0 \int^{\infty}_0 \frac{\mathcal{M}_y(s,x)}{1+s}ds 
    \frac{\tau_i(x)}{\mathbb{P}_{i,j}} \times \\
    \exp(-\int^{\infty}_0 2\pi \lambda_{p_2} z (1-\exp(-\bar{m}(1-Q_1(\frac{z}{\sigma},\frac{(\frac{P_2}{P_i})^{\frac{1}{\beta}}x}{\sigma}))))dz) \times\\
    \exp(- \pi \lambda_{3} (\frac{P_3}{P_i})^{\frac{2}{\beta}} x^2)dx,
\end{multline}
where 
\begin{align*}
&\mathcal{M}_y(s,x) = \\
& \exp(-2\pi \lambda'_1 x^2 \frac{s \frac{P_1}{P_i} (\frac{a}{x})^{2-\beta}}{\beta(1-\frac{2}{\beta})} {}_{2}F_1[1,1-\frac{2}{\beta};2-\frac{2}{\beta};-s \frac{\frac{P_1}{P_i}}{(\frac{a}{x})^{\beta}}]) \times\\
&\exp(-2\pi \lambda_3' x^2 \frac{s (\frac{P_3}{P_i})^{\frac{2}{\beta}}}{\beta(1-\frac{2}{\beta})} {}_{2}F_1[1,1-\frac{2}{\beta};2-\frac{2}{\beta};-s]) \times\\
&\exp(-2 \pi \lambda_{p_2} \int_{0}^{\infty}  (1-\exp(-\Bar{m} \int_{(\frac{P_2}{P_i})^{\frac{1}{\beta}}x}^{\infty} \frac{1}{1+(s\frac{P_2}{P_i})^{-1}(\frac{y+z}{x})^{\beta}} \\ 
& \frac{1}{\sqrt{2\pi}\sigma^2} \exp(-\frac{y^2}{2\sigma^2})dy))zdz).
\end{align*}

\begin{proof}
See Appendix B.
\end{proof}
\end{theorem}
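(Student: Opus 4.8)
The plan is to mirror the derivation of Theorem 1 in Appendix A and to change only the two ingredients that distinguish Case 2 from Case 1: the law of the serving distance and the tier-1 (D2D) interference. First I would start from the rate definition (16) and reduce the inner expectation over the fading and the interfering point processes by the elementary identity $\mathbb{E}[\ln(1+Z)] = \int_0^\infty \frac{\mathbb{P}(Z>s)}{1+s}\,ds$, valid for any nonnegative $Z$; applied to $Z = SINR_i(x)$ conditioned on the serving distance $\|\vec{x}_i\|=x$, this rewrites the inner expectation as $\int_0^\infty \frac{\mathbb{P}(SINR_i(x)>s)}{1+s}\,ds$. Next I would use the Rayleigh fading assumption: since $h_{\vec{x}_i}$ is unit-mean exponential, the event $SINR_i(x)>s$ is equivalent to $h_{\vec{x}_i} > s x^{\beta}(I+N_0)/P_i$, where $I$ is the aggregate interference, so that $\mathbb{P}(SINR_i(x)>s) = \mathbb{E}_I[\exp(-s x^{\beta}(I+N_0)/P_i)]$, i.e. (up to the noise factor, dropped in the interference-limited regime consistent with the stated $\mathcal{M}_y$) the Laplace transform of $I$ at $s x^{\beta}/P_i$.

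By the mutual independence of the three tiers this transform factorizes into a product over $j\in\{1,2,3\}$, and I would \emph{define} $\mathcal{M}_y(s,x)$ to be exactly that product. For the PPP tier $j=3$ I would apply the probability generating functional $\mathbb{E}[\prod_{\vec{y}\in\Phi}f(\vec{y})] = \exp(-\lambda\int(1-f))$ with $f(\vec{y}) = (1+s x^\beta P_3\|\vec{y}\|^{-\beta}/P_i)^{-1}$ and with interferers excluded from the disk of radius $(P_3/P_i)^{1/\beta}x$ imposed by the max-power association; the substitution $t=r/x$ and the Euler integral representation of ${}_{2}F_1$ then reproduce the tier-3 hypergeometric factor exactly as in Theorem 1. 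For the TCP tier $j=2$ I would instead use the generating functional of the cluster process, whose outer Poisson integral over the parental positions $z$ and inner $\bar m$-weighted integral against the Gaussian displacement kernel yield the nested $\exp(\ldots(1-\exp(-\bar m\int\ldots)))$ factor; this term is \emph{unchanged} from Case 1, since the SBS interference is treated identically.

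The genuinely new step, and the one I expect to be the crux, is the tier-1 (D2D) interference. In Case 1 the D2D tier is a candidate serving tier, so its interferers are excluded from the disk of radius $(P_1/P_i)^{1/\beta}x$; in Case 2 the typical user is cache-enabled but its requested content is \emph{uncached}, so it cannot be served over a D2D link even by the nearest cache-enabled user. That nearest node is therefore no longer excluded from the interference field and its distance $a$ may tend to $0$. I would rerun the PPP interference computation for tier 1 with lower integration limit $a$ in place of $(P_1/P_i)^{1/\beta}x$; the same $t=r/x$ substitution and ${}_{2}F_1$ identification then produce the modified factor with prefactor proportional to $(a/x)^{2-\beta}$ and hypergeometric argument $-s(P_1/P_i)/(a/x)^{\beta}$, after which one passes to the limit $a\to 0$. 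Because $\beta=4>2$, the integrand $\bigl(1-(1+sP_1 x^\beta r^{-\beta}/P_i)^{-1}\bigr)r$ behaves like $r$ near the origin and is integrable, so this limit is well defined; checking this integrability (and that no contribution is lost in the cluster functional as $a\to0$) is the main technical point to verify.

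Finally I would assemble the pieces. Substituting the product $\mathcal{M}_y(s,x)$ into $\int_0^\infty \mathcal{M}_y(s,x)/(1+s)\,ds$, multiplying by the Case-2 serving-distance density $f_{\Hat{\mathcal{S}}_i}(x)$ from (22)---which, unlike $f_{\mathcal{S}_i}$, reflects association over the tiers $\{2,3\}$ only and hence carries the normalizing factor $1/\mathbb{P}_{i,j}$ together with a single tier-3 exponential rather than the full tier-$\{1,3\}$ product---and integrating over $x\in(0,\infty)$, I would collect terms to recover the claimed expression for $\mathcal{U}_{2,i}$.
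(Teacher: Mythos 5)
Your proposal is correct and takes essentially the same route as the paper: Appendix B likewise reuses the Theorem 1 machinery wholesale, swaps in the Case-2 serving-distance density $f_{\hat{\mathcal{S}}_i}$ with its $1/\mathbb{P}_{i,j}$ normalization, leaves the tier-2 (TCP) and tier-3 PGFL factors untouched, and recomputes only the tier-1 PGFL with exclusion radius $a$ (allowed to tend to $0$), which produces exactly the modified $(a/x)^{2-\beta}$ hypergeometric factor you describe. The only cosmetic difference is your opening reduction via $\mathbb{E}[\ln(1+Z)]=\int_0^\infty \mathbb{P}(Z>s)/(1+s)\,ds$ together with Rayleigh fading on the serving link, where the paper instead invokes Hamdi's lemma to get $\int_0^\infty (\mathcal{M}_y(s,x)-\mathcal{M}_{xy}(s,x))/s\,ds$ and then factors $\mathcal{M}_{xy}=\mathcal{M}_y/(1+s)$; both land on the same $\int_0^\infty \mathcal{M}_y(s,x)/(1+s)\,ds$.
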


\subsection{The Average Ergodic Rate in Case 3}
Recall that in case 3, the typical user who is not cache-enabled and the highest power provider is a D2D node but the requested contents are not cached in its local storage; therefore, the only choice for the user is to request the contents to either MBS or SBS (see Fig. 2). In this case, the PDF of the distance between the typical user and the associated serving node (MBS or SBS) is a joint PDF under the event such that a cache-enabled user is the node providing the highest power while either MBS or SBS is the next highest one. Let $x$ be the distance between the typical user and the cache-enabled user providing the highest power, and let $y$ be the distance between the user and the associated serving node in $j$-th tier (MBS or SBS). Given an event $\mathcal{S}_{1,j} = \{P_1 \|\Vec{x}^*_1\|^{-\beta} >  P_j \|\Vec{x}^*_{j}\|^{-\beta} \cap P_j \|\Vec{x}^*_j\|^{-\beta} > P_k \|\Vec{x}^*_k\|^{-\beta}: j, k \in \{2,3\} \; \text{s.t.} \; j\neq k\}$, the joint PDF of the distance between the typical user to the serving node in $j$-th tier is
\begin{multline}
    f_{\mathcal{S}_{1,j}}(x,y) = \frac{\tau_1(x) \tau_j(y)}{\mathbb{P}_{1,j,k}} \exp(-\pi \lambda_1 x^2) \times \\ \exp(-\int^{\infty}_0 2\pi \lambda_{p_2} z(1-\exp(-\bar{m}(1-Q_1(\frac{z}{\sigma},\frac{(\frac{P_2}{P_j})^{\frac{1}{\beta}}y}{\sigma}))))dz) \times\\
    \exp(- \pi \lambda_{3} (\frac{P_3}{P_j})^{\frac{2}{\beta}} y^2),
\end{multline}
where  $j, k \in \{2,3\} \; \text{s.t.} \; j\neq k$ and $\tau_1(x)$ and $\tau_j(y)$ are defined in TABLE 1. The derivation is similar to the Lemma 2 using the result of Proposition 1. Therefore, we skip the derivation. Then, following a similar argument as case 1 and case 2, we obtain the next theorem. 
\begin{theorem}
The average ergodic rate of the typical user connecting to a node in the $j$-th tier in case 3 is
\begin{multline}
    \mathcal{U}_{3,j} = \int^{\infty}_0 \int^{\frac{P_1}{P_j}^{\frac{1}{\beta}}y}_0 \int^{\infty}_0 \frac{\mathcal{M}_y(s,x,y)}{1+s}ds \times \\ 
    \frac{\tau_1(x) \tau_j(y)}{\mathbb{P}_{1,j,k}} \exp(-\pi \lambda_1 x^2) \times \\ \exp(-\int^{\infty}_0 2\pi \lambda_{p_2} z(1-\exp(-\bar{m}(1-Q_1(\frac{z}{\sigma},\frac{(\frac{P_2}{P_j})^{\frac{1}{\beta}}y}{\sigma}))))dz) \times\\
    \exp(- \pi \lambda_{3} (\frac{P_3}{P_j})^{\frac{2}{\beta}} y^2)dxdy,
\end{multline}
where 
\begin{align*}
&\mathcal{M}_y(s,x,y) = \\
& \exp(-2\pi \lambda'_1 y^2 \frac{s \frac{P_1}{P_j} (\frac{x}{y})^{2-\beta}}{\beta(1-\frac{2}{\beta})} {}_{2}F_1[1,1-\frac{2}{\beta};2-\frac{2}{\beta};-s \frac{\frac{P_1}{P_j}}{(\frac{x}{y})^{\beta}}]) \times\\
&\exp(-2\pi \lambda_3' y^2 \frac{s (\frac{P_3}{P_j})^{\frac{2}{\beta}}}{\beta(1-\frac{2}{\beta})} {}_{2}F_1[1,1-\frac{2}{\beta};2-\frac{2}{\beta};-s]) \times\\
&\exp(-2 \pi \lambda_{p_2} \int_{0}^{\infty}  (1-\exp(-\Bar{m} \int_{(\frac{P_2}{P_j})^{\frac{1}{\beta}}y}^{\infty} \frac{1}{1+(s\frac{P_2}{P_j})^{-1}(\frac{y'+z}{y})^{\beta}} \\ 
&\frac{1}{\sqrt{2\pi}\sigma^2} \exp(-\frac{y'^2}{2\sigma^2})dy'))zdz).
\end{align*}
\end{theorem}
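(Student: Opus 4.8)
The plan is to reuse the two-step template of Appendices~A and~B, now propagating the two coupled distances $x$ (to the nearest, content-less D2D) and $y$ (to the serving node in tier~$j$) through the whole computation. Starting from the definition of the average ergodic rate, I would first dispatch the inner expectation over the fading and the interfering processes with the identity $\mathbb{E}[\ln(1+Z)]=\int_0^\infty \frac{\mathbb{P}(Z>s)}{1+s}\,ds$, giving
\begin{equation*}
\mathbb{E}_{SINR_j}[\ln(1+SINR_j(y))|\|\Vec{x}_j\|=y] = \int_0^\infty \frac{\mathbb{P}(SINR_j(y)>s)}{1+s}\,ds,
\end{equation*}
and then take the outer expectation over $(x,y)$ against the joint serving-distance density $f_{\mathcal{S}_{1,j}}(x,y)$ established just before the theorem. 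This already fixes the triple-integral skeleton of the claim, with the $s$-integral innermost and the $(x,y)$-integration against $f_{\mathcal{S}_{1,j}}$ outermost, and with $\mathbb{P}_{1,j,k}$ appearing through the normalization of $f_{\mathcal{S}_{1,j}}$.

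The substantive step is the conditional tail $\mathbb{P}(SINR_j(y)>s)$. Since $h_{\Vec{x}_j}$ is unit-mean exponential, this tail equals the Laplace transform of the aggregate interference evaluated at $s\,y^{\beta}/P_j$ (the thermal-noise factor is dropped in the interference-limited regime, consistent with $\mathcal{M}_y$ carrying no noise term), and it factorizes over the three tiers by their independence. I would evaluate the three factors separately, each with an exclusion region dictated by the association event $\mathcal{S}_{1,j}$: (i) the tier-1 (D2D) factor is a homogeneous-PPP Laplace transform whose nearest interferer sits at distance $x$ (the D2D that failed to supply the content is now the closest interferer), which is exactly what turns the lower limit into $x$ and yields the ${}_{2}F_1$ term with argument scaled by $(x/y)^{\beta}$; (ii) the tier-3 (MBS) factor is the standard homogeneous-PPP Laplace transform with lower limit set at the association boundary $(P_3/P_j)^{1/\beta}y$, producing the second ${}_{2}F_1$ term with prefactor $(P_3/P_j)^{2/\beta}$; and (iii) the tier-2 (SBS) factor is the Laplace \emph{functional} of the Thomas cluster process, obtained by conditioning on the parent PPP $\Phi_{p_2}$, applying the daughter PGFL against the Gaussian displacement kernel, and then de-conditioning over the parents to get the nested exponential-of-integral in $\lambda_{p_2}$, $\bar{m}$, and $\sigma$ with inner lower limit $(P_2/P_j)^{1/\beta}y$. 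Multiplying the three factors reproduces $\mathcal{M}_y(s,x,y)$.

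The remaining assembly is bookkeeping: substitute $\mathcal{M}_y(s,x,y)$ and $f_{\mathcal{S}_{1,j}}(x,y)$ back into the nested integral and pin down the ranges. The $y$- and $s$-ranges are $(0,\infty)$, and the decisive limit is $x\in(0,(P_1/P_j)^{1/\beta}y)$, encoding that tier~1 must deliver strictly more received power than the serving tier~$j$, i.e.\ $P_1 x^{-\beta}>P_j y^{-\beta}$, in agreement with $\mathcal{S}_{1,j}$.

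I expect the tier-2 contribution to be the main obstacle. Unlike the PPP tiers it has no closed PPP Laplace transform, so one must exploit the cluster structure (condition on $\Phi_{p_2}$, treat each cluster through the daughter PGFL, then average over the parent PPP) while simultaneously removing the serving node from the interference field in a way consistent with the association event. The subtlety is that the same factor must cover both $j=2$ and $j=3$: when $j=2$ the limit $(P_2/P_j)^{1/\beta}y$ collapses to $y$ and excises the serving SBS, whereas when $j=3$ the SBS is a pure interferer entering from $(P_2/P_3)^{1/\beta}y$; getting this exclusion limit and the offset integral over the displacement variable right is where the care is needed, after which the two PPP factors and the outer $s$- and $(x,y)$-integrations are routine and mirror Appendices~A and~B.
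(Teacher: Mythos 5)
Your proposal is correct and follows essentially the same route as the paper's Appendix C: the same conditioning on the joint serving-distance density $f_{\mathcal{S}_{1,j}}(x,y)$ with normalization $\mathbb{P}_{1,j,k}$, the same tier-by-tier factorization of the interference Laplace functional (homogeneous-PPP PGFLs with exclusion radii $x$ and $(P_3/P_j)^{1/\beta}y$, and the TCP Laplace functional obtained by daughter-PGFL conditioning on $\Phi_{p_2}$ followed by de-conditioning over the parents), and the same decisive integration range $x\in\bigl(0,(P_1/P_j)^{1/\beta}y\bigr)$. The only difference is cosmetic: the paper reaches the integrand $\mathcal{M}_y(s,x,y)/(1+s)$ via Hamdi's lemma (writing $\mathbb{E}[\ln(1+SINR)]$ as $\int_0^\infty (\mathcal{M}_y-\mathcal{M}_{xy})/s\,ds$ and then using $\mathcal{M}_{xy}=\mathcal{M}_y/(1+s)$ from the exponential signal fading), whereas you use the tail identity $\mathbb{E}[\ln(1+Z)]=\int_0^\infty \mathbb{P}(Z>s)/(1+s)\,ds$ together with the Rayleigh tail-equals-Laplace-transform observation; these are equivalent here and yield the identical expression.
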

\begin{proof}
See Appendix C.
\end{proof}
We denote $\mathcal{U}_4,i$ as the average ergodic rate of the typical user in case 4. The rate is extremely fast since the requested contents can be fetched from their local device immediately. We are interested in the QoS of the downlink data rate among the cases and therefore ignore case 4 for our study.

\section{QoS of Clustered Deployment and Caching Aware Capacity Allocation}

In this section, we analyze the QoS of the cluster deployment and the caching aware capacity allocation system. We define all essential settings and parameters for modeling the traffic flow of caching aware capacity allocation. We follow the same scenarios of the system and the configurations of the parameters as the baseline work (see Section \RomanNumeralCaps{5}, \cite{Yang2016}). \par
\begin{figure*}[!t]
\normalsize
\begin{equation}
\Vec{D}_{8\times4}=
\begin{pmatrix}
\mathcal{G}_{3,1} (1-\alpha) F(1,M_1) & \mathcal{G}_{3,2} (1-\alpha) F(1,M_2) & \mathcal{G}_{3,3} (1-\alpha) & 0 \\
0 & \mathcal{G}_{3,2} (1-\alpha) F(M_2+1,N) & 0 & 0 \\
0 & \mathbb{P}_{2,3} \alpha F(M_1+1,M_2) & \mathbb{P}_{3,2} \alpha F(M_1+1,N) & 0 \\
0 & \mathbb{P}_{2,3} \alpha F(M_2+1,N) & 0 & 0 \\
0 & \mathbb{P}_{1,2,3} (1-\alpha) F(M_1+1,M_2) & \mathbb{P}_{1,3,2} (1-\alpha) F(M_1+1,N) & 0 \\
0 & \mathbb{P}_{1,2,3} (1-\alpha) F(M_2+1,N) & 0 & 0 \\
0 & 0 & 0 & \alpha F(1,M_1) \\
0 & 0 & 0 & 0 \\
\end{pmatrix}.
\end{equation}
\hrulefill
\vspace*{2pt}
\end{figure*}

\subsection{User State and Class}
First, we define the classes of users in all four cases to characterize the presence of users who requests contents under different circumstances. We consider that the typical user is in a certain states depending on which tier to request and whether the request is BH-needed or not. The probability that the typical user is active in different states is expressed in a matrix $\Vec{D}_{8\times4}$, where each entry indicates the probability that the typical user requests contents in in the corresponding states and the columns and the rows of the matrix indicate the state of the active user. This matrix is essentially identical to what is given in the baseline, where they provide it as the comprehensive tabular format (see TABLE \RomanNumeralCaps{1} \cite{Yang2016}). The rows of the matrix indicate all four cases with BH-needed or not, and there are eight different states in total, where we regard those eight states as classes of user. The columns represent tiers from which the user in a certain state requests contents.
For $i=1,...,8$, and $j=1,...,4$, we get the intensity of active users in each single BS/D2D by the thinning property of independent PPP; for example, $\lambda_0 \Vec{D}_{i,j}$ is the intensity of users in the state of $i$-th row and $j$-th column. We define that the set of index of rows represents a class such that $\{\{1,2\},\{3,4\},\{5,6\},\{7,8\}\} \in \{ \text{Class 1}, \text{Class 2}, \text{Class 3}, \text{Class 4} \}$, and the index with even number represents BH-needed while the odd number represents backhaul free (BH-free). The set of index of columns represents a tier where we see that $\{1,2,3,4\} \in \{\text{D2D}, \text{SBS}, \text{MBS}, \text{Local}\}$. Here, "Local" stands for the case when the users who request contents to their own local storage when the contents are cached.

\subsection{User Request Arrival and Service Rate}
Next, we define the total request arrival rate of class of users in each state represented as $(i, j)$ for $i \in \{1,2,...,8\}$ and $j \in \{1,2,3,4\}$. Each MBS cell has the average number of $\frac{\lambda_0}{\lambda_3}$ users, and the rate of a request by each user in that cell is modeled as homogeneous Poisson process with mean arrival rate of $\varsigma$ [requests/s]. Then, the requests of a single user is homogeneous Poisson process with parameter $\frac{\varsigma \lambda_3}{\lambda_0}$. Based on the matrix $\Vec{D}_{8\times4}$, we can obtain the average number of users who are active in the corresponding state in each state. For $i=1,...,8$ and $j=1,...,4$, the average number of users in each state is $\frac{\lambda_0 \Vec{D}_{i,j}}{\lambda_j'}$, where $\lambda_4' = \alpha \lambda_0$. The total request arrival rate in each BS/D2D is ready to be obtained as $\zeta_{i,j}= \frac{\lambda_0 \Vec{D}_{i,j}}{\lambda_j'} \frac{\varsigma \lambda_3}{\lambda_0}$. Each single request by a user consist of a set of contents, and the volume of a set of contents per request is a random variable denoted as $B$ and follows the exponential distribution with mean $\frac{1}{\varrho}$ [contents/request]. Next, we define the average ergodic rate of the user in the state of $(i,j)$ as follows:
\begin{equation}
    \mathbf{A}_{8 \times 4}=
    \begin{cases}
    A_{2m-1,j} = \eta \omega \mathcal{U}_{m,j} \mathbbm{1}\{ \Vec{D}_{2m-1,j} \neq 0\}, \\
    A_{2m,j} = \eta \omega f(\mathcal{U}_{m,j}) \mathbbm{1}\{ \Vec{D}_{2m,j} \neq 0\},
    \end{cases}
\end{equation}
where $m=1,2,3,4$ and $f(\cdot)$ is the effect of backhaul delay, is an arbitrary function. $\eta=1.443$ is the conversion factor between [nats] and [bits], and $\omega$ is bandwidth [Hz] shared among different tiers.
Consider each serving node (MBS, SBS, or D2D) as a server which processes the arrivals of requests from the different class of users, and each user receives the service of downlink transmission at a rate configured by the corresponding cases with BH-needed or BH-free. Then, according to the class of users, a serving node allocates the downlink transmission capacity to communicating users. This system can be viewed as the DPS queue.

\subsection{DPS Queue and QoS Metric}
Let $\mathbb{D}:=\{1,2,...,8\}$ be the set of class, and let $ \{X_j(t):t\geq0\}$ be the process of the number of users who request contents to a tier for $j \in \{1,2,3,4\}$ with a vector $\Vec{x}_j = (x_{1,j},x_{2,j},...,x_{8,j})$ which is counting the number of requests in each class. We introduce weights $w_{1,j}, w_{2,j},...,w_{8,j}$ to the class of users to differentiate the priority of processing their requests. Then, we claim that $ \{X_j(t):t\geq0\}$ has discrete state space $\mathbb{N}^{\mathbb{D}}$ with a continuous-time Markov process generated by
\begin{equation}
    \begin{cases}
    q(\Vec{x}_j,\Vec{x}_j+\Vec{\epsilon}_i) = \zeta_{i,j}, & \Vec{x}_j \in \mathbb{N}^{\mathbb{D}}, \\
    q(\Vec{x}_j,\Vec{x}_j-\Vec{\epsilon}_i) = \frac{\Vec{A}_{i,j}}{ S/\varrho}\frac{x_{i,j}}{x_{\mathbb{D},j}} \frac{w_{i,j}}{w_{\mathbb{D},j}}, & \Vec{x}_j \in \mathbb{N}^{\mathbb{D}}, \Vec{x}_j > 0,
    \end{cases}
\end{equation}
where $\Vec{\epsilon}_i$ is the vector of $\mathbb{N}^\mathbb{D}$ with 1 in $i$-th element ($i=1,...,8$) and 0 elsewhere. $x_{\mathbb{D},j} = \sum_{i \in \mathbb{D}} x_{i,j}$ is the total number of requests in the queue, and $w_{\mathbb{D},j}=\sum_{i\in \mathbb{D}} w_{i,j}$ is the total weights of all classes. The class traffic demand can be obtained as $\rho_{i,j} = \frac{\zeta_{i,j}S}{\varrho}$, and the critical traffic value which the queue will be steady state is $\rho_{c,j} = \frac{\rho_{\mathbb{D},j}}{\sum_{i\in \mathbb{D}}\rho_{i,j}\Vec{A}^{-1}_{i,j}}$ where $\rho_{\mathbb{D},j}= \sum_{i \in \mathbb{D}}\rho_{i,j}$ (see \cite{Mohamed2013, Yang2016}). We define a tier traffic intensity of DPS queue as $\rho^{'}_{i,j} = \frac{\zeta_{i,j}}{S/(\varrho \Vec{A}_{i,j})} \delequal \frac{\lambda_{(i,j)}}{\mu_{(i,j)}}$, where $S/(\varrho \Vec{A}_{i,j})$ is the rate of completion of transmission of requests [requests/s]. Recall that $S$ is the size of each content. We need to obtain mean sojourn time of DPS queue described thus far; however, the total class of users is large, and this makes deriving the exact mean sojourn time intractable. To this end, we use the best approximated mean sojourn time of DPS queue (see \cite{izagirre2017}). We assume that the traffic load of the tier is less than the critical traffic value, which is less than 1, and all weights are strictly positive, which implies the queue is stable. We also assume that arrivals of requests do not change over time and finite. This yields the condition that the queue is stable and ergodic; Little's law is applicable.
\begin{proposition}
For $\sum_{i\in \mathbb{D}}\rho^{'}_{i,j} < \rho_{c,j} < 1$, all $w_{i,j}>0$, and $\mu_{(i,j)}<\infty$, 
approximated mean number of requests, delay, and throughput per user request of the $i$-th class at $j$-th tier for $i \in \mathbb{D}$ and $j \in \{1,2,3,4\}$ can be obtained as,
\begin{align}
    \Bar{N}^{INT}_{i,j} &= \lambda_{(i,j)}\Bar{S}^{INT}_{(i,j)}(\lambda_{(i,j)},\mu_{(i,j)},w_{1,j},w_{2,j},...,w_{8,j}),\\
    \Bar{D}^{INT}_{i,j} &=\Bar{S}^{INT}_{(i,j)}(\lambda_{(i,j)},\mu_{(i,j)},w_{1,j},w_{2,j},...,w_{8,j}),\\
    \Bar{T}^{INT}_{i,j} &=\frac{\rho^{'}_{i,j}}{\Bar{N}^{INT}_{i,j}},
\end{align} 
where the approximated sojourn time of DPS queue is
\begin{multline}
    \Bar{S}^{INT}_{(i,j)}(\lambda_{(i,j)},\mu_{(i,j)},w_{1,j},w_{2,j},...,w_{8,j}) = \\ \frac{1}{\mu_{(i,j)}} + \frac{1}{\mu_{(i,j)}}\sum_{k=1}^{8}(\frac{\lambda_{(k,j)}}{\mu_{(k,j)}}) + \\ \sum_{k=1}^{8}(\frac{\lambda_{(k,j)}}{\mu_{(k,j)}}\frac{w_{k,j}-w_{i,j}}{w_{k,j}\mu_{(k,j)}-w_{i,j}\mu_{(i,j)}}) + \\
    \frac{(\sum_{k=1}^{8}(\frac{\lambda_{(k,j)}}{\mu_{(k,j)}}))^2}{1-\sum_{k=1}^{8}(\frac{\lambda_{(k,j)}}{\mu_{(k,j)}})}\frac{1}{w_{i,j}\mu_{(i,j)}}\frac{\sum_{k=1}^{8}(\frac{\lambda_{(k,j)}/\lambda_{(\mathbb{D},j)}}{\mu_{(k,j)}^2})}{\sum_{k=1}^{8}(\frac{\lambda_{(k,j)}/\lambda_{(\mathbb{D},j)}}{\mu_{(k,j)}^2w_{k,j}})}.
\end{multline}
\begin{proof}
Since $\sum_{i\in \mathbb{D}}\rho^{'}_{i,j} < \rho_{c,j} < 1$ and the weight of class $i$ is strictly positive and the service time distribution is finite, then this is sufficient that DPS queue is stable (see Theorem 1, \cite{Avrachenkov2005}). We assumed that the arrival of the requests is ergodic. The approximated mean unconditional sojourn time when service time is exponentially distributed is given by using light and heavy traffic approximation order 2 (see \cite{izagirre2017}). The equation (28) follows from Little's law, and (29) and (30) are by the definition (see Proposition 1 \cite{Mohamed2013}).
\end{proof}
\end{proposition}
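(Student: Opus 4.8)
The plan is to treat this as a synthesis of three ingredients rather than a single derivation from scratch: (i) establishing that the continuous-time Markov process defined by the generator in (27) is ergodic, so that the stationary per-class quantities exist; (ii) importing a closed-form approximation of the DPS mean sojourn time, since the exact stationary distribution of a DPS queue with this many classes is analytically intractable; and (iii) converting that sojourn time into the mean occupancy, delay, and throughput by Little's law together with the flow-level definitions. The real content lies in matching the hypotheses of the borrowed results to the queue specified by (27), not in any long computation.

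First I would check ergodicity. The process $\{X_j(t):t\geq 0\}$ with generator (27) is a DPS queue in which class $i$ arrives at rate $\lambda_{(i,j)}=\zeta_{i,j}$ and is served at aggregate rate $\mu_{(i,j)}=\varrho\Vec{A}_{i,j}/S$ shared in proportion to the strictly positive weights $w_{i,j}$. Under the stated hypotheses $\sum_{i\in\mathbb{D}}\rho^{'}_{i,j}<\rho_{c,j}<1$, $w_{i,j}>0$, and $\mu_{(i,j)}<\infty$, the aggregate load sits strictly below the critical value, which is exactly the regime in which the DPS queue is positive recurrent by Theorem 1 of \cite{Avrachenkov2005}. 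Combined with the standing assumption that arrivals are stationary and the service times finite, the chain is ergodic, a stationary regime exists, and Little's law holds class-by-class.

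Next I would bring in the sojourn-time approximation. Rather than solving the balance equations of (27) directly, I would invoke the order-two interpolation between the light- and heavy-traffic expansions derived in \cite{izagirre2017} for exponential service, specialized to the eight classes here; this yields precisely $\Bar{S}^{INT}_{(i,j)}$ of (31). I would verify that its structure is the expected one: the first two terms are the light-traffic expansion about $\sum_k\rho^{'}_{k,j}=0$, the third is the DPS-specific weight-coupling correction that vanishes when all $w_{k,j}$ coincide, and the final term carrying the $1/(1-\sum_k\rho^{'}_{k,j})$ factor reproduces the heavy-traffic blow-up. Equation (29) is then immediate, since the delay of a class-$i$ request is its sojourn time; (28) follows from Little's law $\Bar{N}^{INT}_{i,j}=\lambda_{(i,j)}\Bar{D}^{INT}_{i,j}$; and the per-request throughput (30), $\Bar{T}^{INT}_{i,j}=\rho^{'}_{i,j}/\Bar{N}^{INT}_{i,j}$, is the standard definitional relation used in the flow-level analysis of \cite{Mohamed2013}.

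The main obstacle, as I see it, is not computation but justifying the borrowed pieces under this parameterization. I would need to confirm that the exponentially distributed request volume $B$ of mean $1/\varrho$ together with the ergodic rate $\Vec{A}_{i,j}$ produces exponential service times of rate $\mu_{(i,j)}=\varrho\Vec{A}_{i,j}/S$, matching the exponential-service hypothesis of \cite{izagirre2017}, and that the zero entries of $\Vec{D}_{8\times4}$ (empty classes) do not destabilize the interpolation, which should follow by setting the corresponding $\lambda_{(i,j)}=0$. The one subtle point is reconciling the stability statement of \cite{Avrachenkov2005} with the critical-load definition $\rho_{c,j}=\rho_{\mathbb{D},j}/\sum_{i\in\mathbb{D}}\rho_{i,j}\Vec{A}^{-1}_{i,j}$ introduced earlier, so that $\sum_{i\in\mathbb{D}}\rho^{'}_{i,j}<\rho_{c,j}<1$ is indeed sufficient for positive recurrence of (27).
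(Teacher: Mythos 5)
Your proposal follows essentially the same route as the paper's proof: stability of the DPS queue via Theorem~1 of \cite{Avrachenkov2005} under the stated load and weight conditions, the order-two light/heavy-traffic interpolation of \cite{izagirre2017} for the exponential-service sojourn time, and then Little's law plus the definitional relations of \cite{Mohamed2013} for the occupancy, delay, and throughput. Your added checks (exponential service rate $\mu_{(i,j)}=\varrho\Vec{A}_{i,j}/S$, empty classes, matching the critical-load definition) are sensible due diligence but do not change the argument.
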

This study aims to see the effectiveness and behavior of caching aware capacity allocation system, where its simulation is infeasible. The accuracy of the approximation of mean unconditional sojourn time of DPS queue has already studied extensively by \cite{izagirre2017}. Therefore, we explore the results of numerical calculation only.

\section{Numerical Results}

In this section, we show the numerical results of the clustered deployment of SBSs and caching aware capacity allocation system. Also, the numerical results of the non-allocated system are presented for elucidating the effect of clustered deployment of SBSs compering with the case of non-clustered deployment of SBSs which is our baseline examined by the previous work done by \cite{Yang2016}.

\begin{figure}[!htbp]
\includegraphics[width=8.8cm,height=5.5cm]{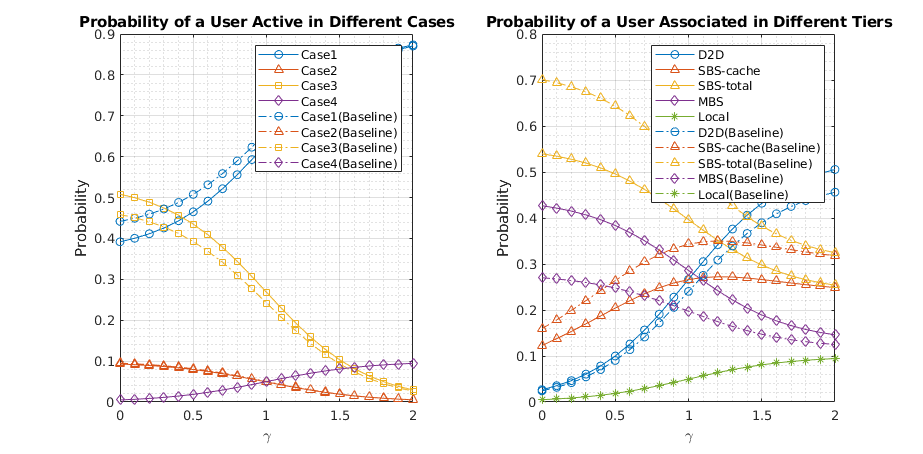}
\centering
\caption{The probability of a user active in different cases and associated in different tiers; $\alpha=0.1$, $\{P_1,P_2,P_3\} = \{3,13,193\}$, $\beta=4$, $\bar{m}=10$, $\{\lambda_0, \lambda_1, \lambda_2, \lambda_3\} = \{ \frac{1000}{\pi*1000^2}, \lambda_0*\alpha, \frac{3*\Bar{m}}{\pi*1000^2}, \frac{2}{\pi*1000^2}\}$, and $ \text{  when the case of baseline, } \lambda_2 = \frac{30}{\pi*1000^2}$. The variance of TCP is $\sigma=250$.}
\end{figure}

Fig.3 shows the probability of a user active in different cases (left) and different tiers (right) with changing the parameter of the distribution of contents. Recall that $\gamma$ is the parameter of content distribution. The larger value of $\gamma$ implies that the distribution is steeper and dominated by popular contents. From the plot on the left, we see that only case 1 and case 3 are sensitive to the distribution of contents. When the popularity of the content is even, the clustered deployment of SBSs reduces the probability of user active in case 1 and increase the probability in case 3. The plot on the right coincides with our intuition; the clustered deployment of SBSs reduces the probability of association to itself from users whereas it increases the probability of association to MBSs. Also, as $\gamma$ increases, the probability of user associated with D2D transmission increases, and when the deployment of SBSs are clustered, the increase is accelerated.

Fig. 4 compares the average ergodic rate of each case (left) and the total average ergodic rate (right) of both the baseline of \cite{Yang2016} and clustered deployment of SBSs. The total average ergodic rate is the weighted sum by the probability of the user active in each case. From the plot on the right, we can confirm that when the deployment of SBSs is clustered, more cache-enabled users are needed to be active to support the requests of the contents from users and the active D2D links yield more interference to degrade the total average data rate than that of the baseline.

\begin{figure}
    \centering
    \includegraphics[width=0.225\textwidth,height=0.225\textheight]{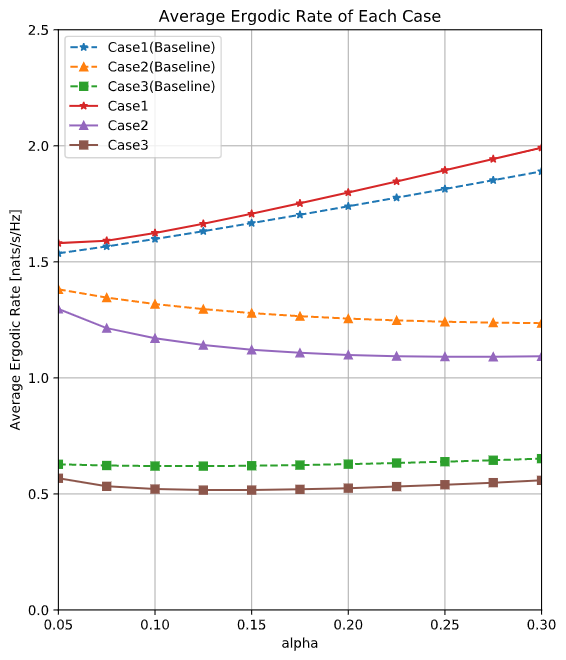}
    \includegraphics[width=0.225\textwidth,height=0.225\textheight]{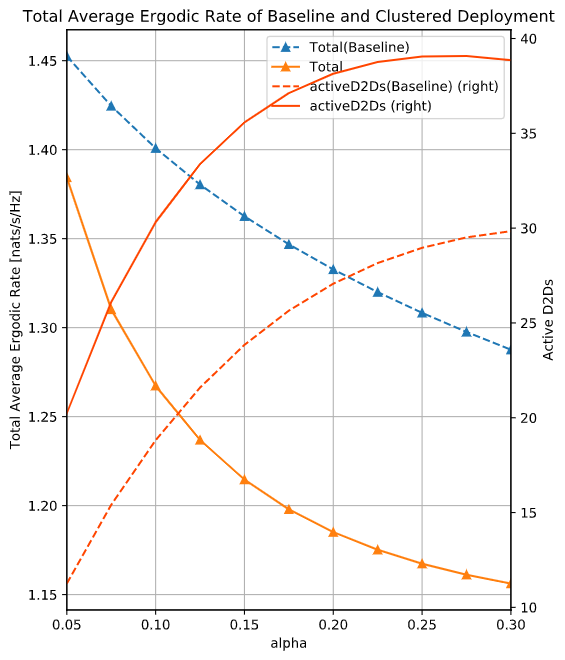}
    \caption{The average ergodic rate in different cases with varying $\alpha$; $\gamma = 0.8$, $\{P_1,P_2,P_3\} = \{73,373,1773\}$, $\beta=4$, $\bar{m}=10$, $A=1000^2$, $\{\lambda_0, \lambda_1, \lambda_2, \lambda_3\} = \{ \frac{300*A}{\pi*500^2}, \lambda_0*\alpha, \frac{3*\Bar{m}*A}{\pi*500^2}, \frac{6*A}{\pi*500^2}\}$, and $ \text{  when baseline case is } \lambda_2 = \frac{30*A}{\pi*500^2}$. The variance of TPP $\sigma=0.05$.}
    \label{fig:foobar}
\end{figure}

Fig. 5 shows the mean number of requests and mean throughput in each SBS under the case of baseline or clustered deployment. When SBSs are clustered, the mean number of requests is less than the case when SBSs are scattered uniformly. This is because each SBS is proximal to the other SBSs and the requests from users who are associated to SBSs are decentralized by those SBSs in a cluster, and also, the reason is the smaller coverage area of SBSs compared to the baseline case. This result confirms the result from Fig. 3 where the probability of a user associated with SBS is smaller than that of the baseline. Next, while $\alpha$ (the ratio of cache-enabled users) increases, the amount of case 1 users belonging to class 1 and 2 decreases significantly; however, the amount of class 2 and 3 users which are class 3 to class 6 increases. The plot also shows the result of caching aware capacity allocation system. We found that assigning more weights on class 5 and 6 (users in case 3) improve the throughput and traffic load not only those users but the entire networks. This improvement is conspicuous as the number of cache-enabled users increases. 

\begin{figure}
    \centering
    \includegraphics[width=0.225\textwidth,height=0.225\textheight]{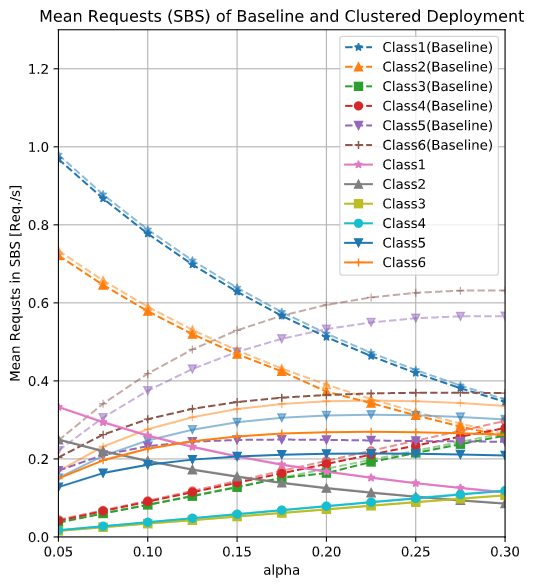}
    \includegraphics[width=0.225\textwidth,height=0.225\textheight]{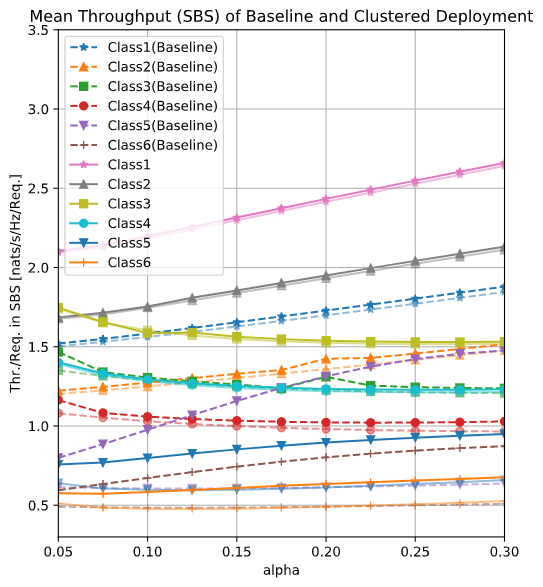}
    \caption{The mean number of requests and the mean throughput of SBS; the lines with pale color are PS queue results, and the lines with bright color are DPS queue results. The weights for baseline and clustered deployment:$\{w_{1,2}, w_{2,2}, w_{3,2},w_{4,2},w_{5,2},w_{6,2}\} = \{1,1,1.1,1.1,1.5,1.87\}$ with $\varsigma = 0.2$, $S = 100$[Mbits], $\omega = 70$MHz, $\varrho = 1$; parameters are the same as Fig. 4.}
    \label{fig:foobar}
\end{figure}

\begin{figure}
    \centering
    \includegraphics[width=0.225\textwidth,height=0.225\textheight]{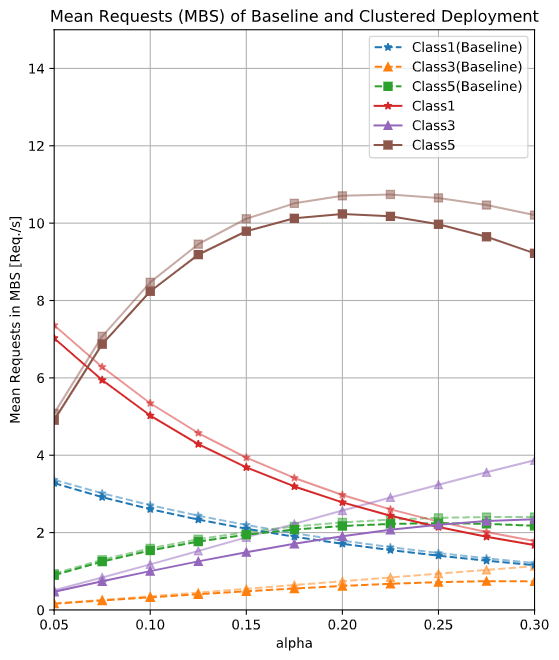}
    \includegraphics[width=0.225\textwidth,height=0.225\textheight]{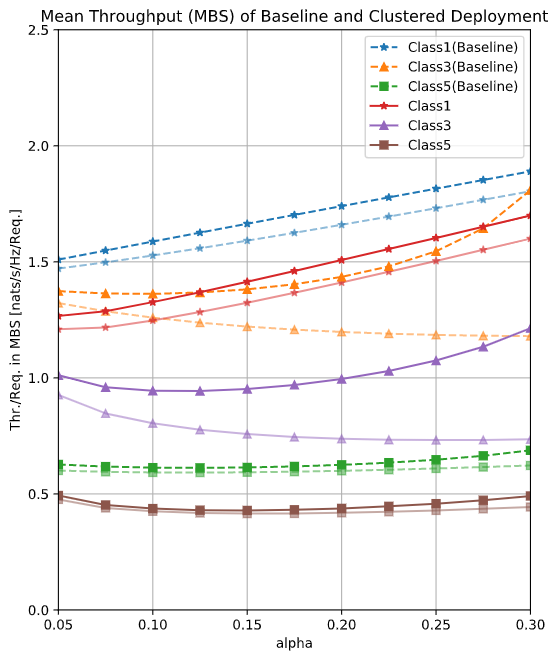}
    \caption{The mean number of requests and the mean throughput of MBS; the lines with pale color are PS queue results, and the lines with bright color are DPS queue results. The weights for baseline:$\{ w_{1,3},w_{3,3},w_{5,3}\} = \{1,1,1.8\}$ and for clustered deployment $\{ w_{1,3},w_{3,3},w_{5,3}\} = \{1,1,1.5\}$ with $\varsigma = 0.2$, $S = 100$[Mbits], $\omega = 70$MHz, $\varrho = 1$; parameters are the same as Fig. 4.}
    \label{fig:foobar}
\end{figure}

Fig. 6 compares the result of caching aware capacity allocated and non-allocated system in MBSs. First, we found that by weighting more on processing the requests from the users in class 5, the throughput can be increased, and the traffic load on MBSs can be alleviated significantly. This is similar to the result of Fig. 5. Next, we found that the traffic load of MBSs increases under the clustered deployment of SBSs and this increase of traffic load is conspicuous from class 5 users. Also, this result confirms the result from Fig. 3 where the probability of a user associated with MBSs is larger than that of baseline.

\begin{figure}
    \centering
    \includegraphics[width=0.225\textwidth,height=0.225\textheight]{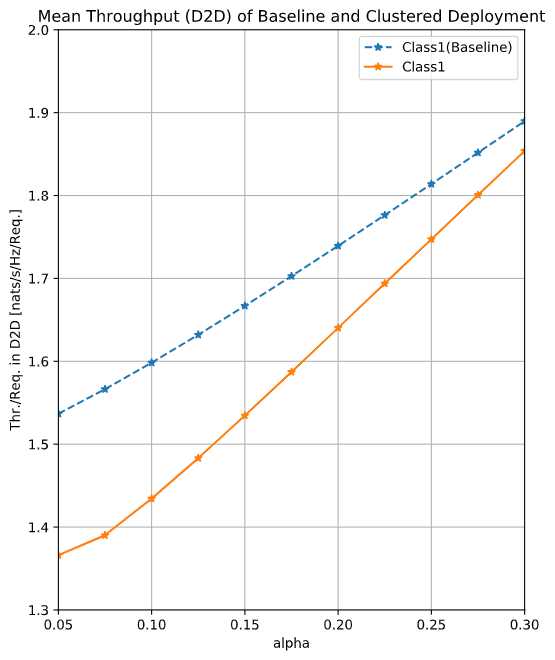}
    \caption{The mean throughput of D2D based on PS queue with $\varsigma = 0.2$, $S = 100$[Mbits], $\omega = 70$MHz, $\varrho = 1$; parameters are the same as Fig. 4.}
    \label{fig:foobar}
\end{figure}

Fig. 7 shows the mean throughput of D2D transmission which has only class 1. The plot confirms that the throughput of D2D transmission is less efficient under clustered deployment of SBSs compared to the baseline case.

\begin{figure*}[!t]
\normalsize
\begin{align}
&\mathcal{M}_y(s,x) = \mathbb{E}_{h_{\Vec{y}},\Phi_j'}[ \exp(-s(\sum^{3}_{j=1} \sum_{\Vec{y} \in \Phi_j' \setminus \{\Vec{x}^*_i\}} \frac{P_j}{P_i} h_{\Vec{y}} \left(\frac{\|\Vec{y}\|}{x} \right)^{-\beta}+ \frac{N_0}{P_i x^{-\beta}}))] \nonumber \\
&\overset{(a)}{=}\prod_{j\in \{1,3\}} \mathbb{E}_{\Phi_j'}[ \prod_{\Vec{y} \in \Phi_j' \setminus \{\Vec{x}^*_i\}} \frac{1}{1+s\frac{P_j}{P_i} (\frac{\|\Vec{y}\|}{x})^{-\beta}}] \mathbb{E}_{\Phi_{p_2},\Psi_i}[\prod_{\Vec{z} \in \Phi_{p_2}} \prod_{\Vec{y} \in \Psi_{[\Vec{z}]} \setminus \{\Vec{x}^*_i\}} \frac{1}{1+s\frac{P_2}{P_i} (\frac{\|\Vec{y}\|}{x})^{-\beta}}] \exp(-s\frac{N_0}{P_i x^{-\beta}}) \nonumber \\
&\overset{(b)}{=} \prod_{j\in \{1,3\}} \exp(- \lambda'_j \int_{\mathbb{R}^2 \setminus b_0((\frac{P_j}{P_i})^{\frac{1}{\beta}}x)}\frac{1}{1+(s\frac{P_j}{P_i})^{-1}(\frac{\|\Vec{y}\|}{x})^{\beta}}d\Vec{y})\exp(-\lambda_{p_2} \int_{\mathbb{R}^2}  (1-\exp(-\Bar{m} \int_{\mathbb{R}^2 \setminus b_0((\frac{P_2}{P_i})^{\frac{1}{\beta}}x)} \nonumber \\ &\frac{\frac{1}{\sqrt{2\pi}\sigma^2} \exp(-\frac{y^2}{2\sigma^2})}{1+(s\frac{P_2}{P_i})^{-1}(\frac{\|\Vec{y}+\Vec{z}\|}{x})^{\beta}} d\Vec{y}))d\Vec{z}) \exp(-s\frac{N_0}{P_i x^{-\beta}}) \nonumber \\
&\overset{(c)}{=}\prod_{j\in \{1,3\}} \exp(- 2 \pi \lambda'_j \int_{(\frac{P_j}{P_i})^{\frac{1}{\beta}}x}^{\infty}\frac{1}{1+(s\frac{P_j}{P_i})^{-1} (\frac{y}{x})^{\beta}}ydy)\exp(-2 \pi \lambda_{p_2} \int_{0}^{\infty}  (1-\exp(-\Bar{m} \int_{(\frac{P_2}{P_i})^{\frac{1}{\beta}}x}^{\infty} \nonumber \\
&\frac{ \frac{1}{\sqrt{2\pi}\sigma^2}\exp(-\frac{y^2}{2\sigma^2})}{1+(s\frac{P_2}{P_i})^{-1}(\frac{y+z}{x})^{\beta}}dy)) z dz) \exp(-s\frac{N_0}{P_ix^{-\beta}}) \nonumber \\
&\overset{(d)}{=} \prod_{j\in \{1,3\}} \exp(-2\pi \lambda'_j x^2 \frac{s (\frac{P_j}{P_i})^{\frac{2}{\beta}}}{\beta(1-\frac{2}{\beta})} {}_{2}F_1[1,1-\frac{2}{\beta};2-\frac{2}{\beta};-s]) \exp(-2 \pi \lambda_{p_2} \int_{0}^{\infty}  \exp(-\Bar{m} \int_{(\frac{P_2}{P_i})^{\frac{1}{\beta}}x}^{\infty} \nonumber \\
&\frac{ \frac{1}{\sqrt{2\pi}\sigma^2}\exp(-\frac{y^2}{2\sigma^2})}{1+(s\frac{P_2}{P_i})^{-1}(\frac{y+z}{x})^{\beta}}dy)) z dz) \exp(-s\frac{N_0}{P_ix^{-\beta}}).
\end{align}
\hrulefill
\vspace*{2pt}
\end{figure*}

\section{Discussion and Conclusion}
In this work, we have elucidated the performance of three tier cache-enabled HetNets consisting of macro base stations (MBSs), small base stations (SBSs), and device-to-device sharing links (D2Ds) under the clustered deployment of SBSs and proposed the capacity allocation system according to the caching circumstances. We found that when the deployment of SBSs is clustered the performance can be differed significantly compared to the case in which SBSs are uniformly scattered. This is because each SBS is proximal to the other SBSs, and the requests from users associated with SBSs are decentralized by those SBSs in a cluster, and also, the reason is the smaller coverage area of SBSs compared to the baseline case. The increase of traffic in MBSs is conspicuous from the non-cache-enabled users who are struggling from the interference from active D2D links. We also found that a larger amount of cache-enabled users are needed to be active as D2D transmitters when SBSs are clustered. As a result, the interference from active D2D links deteriorates the efficiency of the overall data rate of the networks, and this yields more consumption of energy in user devices. We conclude that although employing caching in D2D sharing can improve the throughput of HetNets, the careful management of interference of those D2D links are essential to utilize available resources efficiently. On the other hand, we also found that capacity allocation according to the caching circumstances can improve the throughput of the entire network and alleviate the traffic load. Such an allocation system can be modeled as an elegant queuing theory of DPS queue, and the throughput and the other QoS metrics of the system are derived by using the approximated mean sojourn time. Allocating more resources to serve the requests from a class of users that are interfered with active D2D links can improve not only the throughput of that class of users but also the entire network. As the number of cache-enabled users increases, the throughput gain under the allocation system is significant. Our findings might be practically important when it comes to carrying out D2D caching in HetNets. These findings suggest that smart management of interference from D2D transmissions is essential, and it poses further research direction to incorporating scheduling of D2D transmissions with caching. Considering the determinantal scheduling proposed by \cite{blaszczy2020} into the system might be an interesting point to start.

\section{Appendix}

\subsection{proof of Theorem 1}
Let $x$ be the distance between the typical user and the serving node under the max-power association law denoted as $\|\Vec{x}^*_i\|$. We see that from (15)
\begin{align}
\mathcal{U}_{1,i} &= \int^{\infty}_0\mathbb{E}_{SINR_{i}}[\ln(1+SINR_{i}(x))|\|\Vec{x}^*_i\|=x]f_{\mathcal{S}_i}(x)dx \nonumber \\
&= \int^{\infty}_0 \int^{\infty}_0 \frac{\mathcal{M}_y(s,x)-\mathcal{M}_{xy}(s,x)}{s}ds f_{\mathcal{S}_i}(x)dx,
\end{align}
where 
\begin{multline*}
    \mathcal{M}_y(s,x) = \mathbb{E}_{h_{\Vec{y}},\Phi_j'}[ \exp(-s(\sum^{3}_{j=1} \sum_{\Vec{y} \in \Phi_j' \setminus \{\Vec{x}^*_i\}} \frac{P_j}{P_i} h_{\Vec{y}} \\
    \left(\frac{\|\Vec{y}\|}{x} \right)^{-\beta}+\frac{N_0}{P_i x^{-\beta}}))]
\end{multline*}
and
\begin{multline*}
    \mathcal{M}_{xy}(s,x) = \mathbb{E}_{h_{\Vec{x}^*_i},h_{\Vec{y}},\Phi_j'}[ \exp(-s( h_{\Vec{x}^*_i} +  \\
    \sum^{3}_{j=1} \sum_{\Vec{y} \in \Phi_j' \setminus \{\Vec{x}^*_i\}} \frac{P_j}{P_i} h_{\Vec{y}} \left(\frac{\|\Vec{y}\|}{x}\right)^{-\beta} + \frac{N_0}{P_ix^{-\beta}}))].
\end{multline*}
The second equality uses the lemma 1 given by Hamdi in 2010 (see, \cite{Hamdi2010}). Let $\Psi_{[\Vec{z}]}$ be the daughter points process around $\Vec{z} \in \Phi_{p_2}$. Then, $\mathcal{M}_y(s)$ can be derived as (32), where we move $P_i$ and $\|\Vec{x}_i^*\|$ to the denominator of SINR. The equation in $(a)$ is valid since we apply the Laplace transform of $h_{\Vec{y}}$, and all three tiers are independent. For $(b)$, all interfering nodes are far from the serving node, and we use PGFL of PPP and PPCP \cite{Saha2017_2,Haenggi2012}. $(c)$ holds by expressing Cartesian space to polar coordinates. $(d)$ holds by using change of variable $u=(\frac{y}{x})^{\beta}$ and integral expression from the table of integral (see 3.194 page 315, \cite{Gradshtein1996}). Then, for the case of $\mathcal{M}_{x y}(s)$,
\begin{align}
   \mathcal{M}_{x y}(s, x) &= \mathbb{E}_{h_{\Vec{x}^*_i},h_{\Vec{y}},\Phi_j'}[ \exp(-s( h_{\Vec{x}^*_i} + \nonumber \\
   & \sum^{3}_{j=1} \sum_{\Vec{y} \in \Phi_j \setminus \{\Vec{x}^*_i\}} \frac{P_j}{P_i} h_{\Vec{y}} \left(\frac{\|\Vec{y}\|}{x}\right)^{-\beta} + \frac{N_0}{P_i x^{-\beta}}))] \nonumber \\
   &\overset{(a)}{=} \mathcal{L}_{h_{\Vec{x}^*_i}} (s) \mathcal{M}_y(s, x) \nonumber \\
   &= \frac{1}{1+s}\mathcal{M}_y(s, x),
\end{align}
where (a) holds by independence of random variables and Laplace transform of $h_{\Vec{x}_i}$. plug in (32) to (35), we get our result. Note that the background noise is dominated by the interference in HetNets, and therefore, we let $N_0 \rightarrow 0$.

\subsection{proof of Theorem 2}
The derivation of Theorem 2 is similar to Theorem 1, except the PGFL of interfering nodes from tier 1 which is the Poisson point process of active D2D transmitter. Let denote $a$ be the distance to the closest possible active D2D transmitter, and since $a$ can be small as 0, we derive the PGFL of interference from tier 1 as
\begin{align*}
    &\mathbb{E}_{\Phi_1'}[ \prod_{\Vec{y} \in \Phi_1' \setminus \{\Vec{x}^*_i\}} \frac{1}{1+s\frac{P_j}{P_i} (\frac{\|\Vec{y}\|}{x})^{-\beta}}] =
    \exp(- \lambda'_1 \int_{\mathbb{R}^2 \setminus b_0(a)} \\
    &\frac{1}{1+(s\frac{P_1}{P_i})^{-1}(\frac{\|\Vec{y}\|}{x})^{\beta}}d\Vec{y})\\
    &=\exp(- 2 \pi \lambda'_1 \int_{a}^{\infty}\frac{1}{1+(s\frac{P_1}{P_i})^{-1}(\frac{y}{x})^{\beta}}ydy)\\
    &\overset{(a)}{=}\exp(-2\pi \lambda'_1 x^2 \frac{s \frac{P_1}{P_i} (\frac{a}{x})^{2-\beta}}{\beta(1-\frac{2}{\beta})} {}_{2}F_1[1,1-\frac{2}{\beta};2-\frac{2}{\beta};-s \frac{\frac{P_1}{P_i}}{(\frac{a}{x})^{\beta}}]).
\end{align*}
For the equality (a) we uses the change of variable $u=(\frac{y}{x})^{\beta}$ and the integral expression from the table of integral (see 3.194 page 315, \cite{Gradshtein1996}). Replacing this with the PGFL of interference from tier 1 derived in Theorem 1, we get our result.

\subsection{proof of Theorem 3}
Let $x$ be the distance between the typical user and the closest active D2D node and $y$ be the distance between the typical user and the closest serving node in $j$-th tier for $j \in \{2,3\}$, and let $y'$ be the distance of interfering nodes. By following the same argument as the proof of the Theorem 1 (Appendix A.), we only need to obtain $\mathcal{M}_y(s,x,y)$, where we see that
\begin{align}
\mathcal{U}_{3,j} &= \int^{\infty}_0\mathbb{E}_{SINR_{i}}[\ln(1+SINR_{i}(x))|x,y]f_{\mathcal{S}_{1,j}}(x,y)dxdy \nonumber \\
&= \int^{\infty}_0 \int^{\infty}_0 \frac{\mathcal{M}_y(s,x,y)-\mathcal{M}_{xy}(s,x,y)}{s}ds f_{\mathcal{S}_{1,j}}(x,y)dxdy \nonumber \\
&= \int^{\infty}_0 \int^{\infty}_0 \frac{\mathcal{M}_y(s,x,y)}{1+s}ds f_{\mathcal{S}_{1,j}}(x,y)dxdy.
\end{align}
Then, we derive $\mathcal{M}_y(s,x,y)$ as (33). The equality of (a) holds by using PGFL of PPP and PPCP \cite{Saha2017_2, Haenggi2012}. The distance of the closest active D2D transmitters can be any range, and the distance of interfering nodes of MBSs and SBSs are far from the serving node in tier $j \in \{2,3\}$. Therefore, the distance of active D2D nodes in the case 3 is in the range of $0<x<(\frac{P_1}{P_j})^{\frac{1}{\beta}}y$. The rest of the proof is similar to Theorem 1.

\begin{figure*}[!t]
\normalsize
\begin{align}
&\mathcal{M}_y(s,x,y) = \mathbb{E}_{h_{\Vec{y}},\Phi_n'}[ \exp(-s(\sum^{3}_{n=1} \sum_{\Vec{y'} \in \Phi_n \setminus \{\Vec{x}^*_j\}} \frac{P_n}{P_j} h_{\Vec{y'}} \left(\frac{\|\Vec{y'}\|}{y}\right)^{-\beta}+ \frac{N_0}{P_j y^{-\beta}}))] \nonumber \\
&=\prod_{n\in \{1,3\}} \mathbb{E}_{\Phi_n'}[ \prod_{\Vec{y'} \in \Phi_n' \setminus \{\Vec{x}^*_j\}} \frac{1}{1+s\frac{P_n}{P_j} (\frac{\|\Vec{y'}\|}{y})^{-\beta}}] \mathbb{E}_{\Phi_{p_2},\Psi_i}[\prod_{\Vec{z} \in \Phi_{p_2}} \prod_{\Vec{y'} \in \Psi_{[\Vec{z}]} \setminus \{\Vec{x}^*_j\}} \frac{1}{1+s\frac{P_2}{P_j} (\frac{\|\Vec{y'}\|}{y})^{-\beta}}] \exp(-s\frac{N_0}{P_j y^{-\beta}}) \nonumber \\
&\overset{(a)}{=} \exp(- \lambda'_1 \int_{\mathbb{R}^2 \setminus b_0(x)}\frac{1}{1+(s\frac{P_1}{P_j})^{-1}(\frac{\|\Vec{y'}\|}{y})^{\beta}}d\Vec{y'}) \exp(- \lambda_3 \int_{\mathbb{R}^2 \setminus b_0((\frac{P_3}{P_j})^{\frac{1}{\beta}}y)}\frac{1}{1+(s\frac{P_3}{P_j})^{-1}(\frac{\|\Vec{y'}\|}{y})^{\beta}}d\Vec{y'}) \times \nonumber \\
&\exp(-\lambda_{p_2} \int_{\mathbb{R}^2}  (1-\exp(-\Bar{m} \int_{\mathbb{R}^2 \setminus b_0((\frac{P_2}{P_j})^{\frac{1}{\beta}}y)} \frac{\frac{1}{\sqrt{2\pi}\sigma^2} \exp(-\frac{y^2}{2\sigma^2})}{1+(s\frac{P_2}{P_j})^{-1}(\frac{\|\Vec{y'}+\Vec{z}\|}{y})^{\beta}} d\Vec{y'}))d\Vec{z}) \exp(-s\frac{N_0}{P_j y^{-\beta}})  \nonumber \\
&\overset{(b)}{=} \exp(-2\pi \lambda'_1 y^2 \frac{s \frac{P_1}{P_j} (\frac{x}{y})^{2-\beta}}{\beta(1-\frac{2}{\beta})} {}_{2}F_1[1,1-\frac{2}{\beta};2-\frac{2}{\beta};-s \frac{\frac{P_1}{P_j}}{(\frac{x}{y})^{\beta}}]) \exp(-2\pi \lambda_3 y^2 \frac{s (\frac{P_3}{P_j})^{\frac{2}{\beta}}}{\beta(1-\frac{2}{\beta})} {}_{2}F_1[1,1-\frac{2}{\beta};2-\frac{2}{\beta};-s]) \times \nonumber \\
&\exp(-2 \pi \lambda_{p_2} \int_{0}^{\infty}  (1-\exp(-\Bar{m} \int_{(\frac{P_2}{P_j})^{\frac{1}{\beta}}y}^{\infty} \frac{\frac{1}{\sqrt{2\pi}\sigma^2} \exp(-\frac{y'^2}{2\sigma^2})}{1+(s\frac{P_2}{P_j})^{-1}(\frac{y'+z}{y})^{\beta}}dy'))zdz)\exp(-s\frac{N_0}{P_i y^{-\beta}}).
\end{align}
\hrulefill
\vspace*{2pt}
\end{figure*}

\printbibliography
\end{document}